\newcommand{\bigoh}{\mathcal{O}}
\newcommand{\RRR}{\mathcal{R}}
\newtheorem{observation}{Observation}
\newcommand{\FPT}{\mbox{{\sf FPT}}}
\newcommand{\old}[1]{{}}
\DeclareMathOperator{\bt}{bt}
\DeclareMathOperator{\fobt}{fo-bt}
\title{Parameterized Algorithms\\ for Book Embedding Problems\thanks{Research of Fabrizio Montecchiani supported in part by MIUR under Grant 20174LF3T8 AHeAD: efficient Algorithms for HArnessing networked Data. Robert Ganian acknowledges support by the FWF (Project P31336, ``NFPC'') and is also affiliated with FI MUNI, Brno, Czech Republic.
Research of Sujoy Bhore and Martin N\"ollenburg is supported by the Austrian Science Fund (FWF) grant P 31119.}}
\author{Sujoy Bhore\inst{1} \and Robert Ganian\inst{1} \and Fabrizio Montecchiani\inst{2} \and\\ Martin N\"ollenburg\inst{1} }
\institute{Algorithms and Complexity Group, TU Wien, Vienna, Austria\\\email{\{sujoy,rganian,noellenburg\}@ac.tuwien.ac.at}
\and
Engineering Department, University of Perugia, Perugia, Italy\\\email{fabrizio.montecchiani@unipg.it}
}
\newcommand*{\claimproofs}{\emph{Proof of the Claim.}\quad}
\begin{document}
\maketitle

\begin{abstract}
A \emph{$k$-page book embedding} of a graph $G$ draws the vertices of $G$ on a line and the edges on $k$ half-planes (called \emph{pages}) bounded by this line, such that no two edges on the same page cross. We study the problem of determining whether $G$ admits a $k$-page book embedding both when the linear order of the vertices is fixed, called \textsc{Fixed-Order Book Thickness}, or not fixed, called \textsc{Book Thickness}. Both problems are known to be \NP-complete in general. We show that \textsc{Fixed-Order Book Thickness} and \textsc{Book Thickness} are fixed-parameter tractable parameterized by the vertex cover number of the graph and that \textsc{Fixed-Order Book Thickness} is fixed-parameter tractable parameterized by the pathwidth of the vertex order.  
\end{abstract}

\section{Introduction}
A \emph{$k$-page book embedding} of a graph $G$ is a drawing that maps the vertices of $G$ to distinct points on a line, called  \emph{spine}, and each edge to a simple curve drawn inside one of $k$ half-planes bounded by the spine, called \emph{pages}, such that no two edges on the same page cross~\cite{kainen74,ollmann73}; see \cref{fig:intro} for an illustration. 
This kind of layout can be alternatively defined in combinatorial terms as follows. 
A $k$-page book embedding of $G$ is a linear order $\prec$ of its vertices and a coloring of its edges which guarantee that no two edges $uv$, $wx$ of the same color have their vertices ordered as $u \prec w \prec v \prec x$. 
The minimum $k$ such that $G$ admits a $k$-page book embedding is the \emph{book thickness} of $G$, denoted by $\bt(G)$, also known as the \emph{stack number} of $G$. 
Book embeddings have been extensively studied in the literature, among others due to their applications in bioinformatics, VLSI, and parallel computing (see, e.g.,~\cite{doi:10.1137/0608002,Haslinger1999} and refer also to~\cite{DBLP:journals/dmtcs/DujmovicW04} for a survey). 
A famous result by Yannakakis~\cite{DBLP:journals/jcss/Yannakakis89} states that every planar graph has book thickness at most four. 
Several other bounds are known for special graph families, for instance planar graphs with vertex degree at most four have book thickness two~\cite{DBLP:journals/algorithmica/BekosGR16}, while graphs of treewidth $w>2$ have book thickness $w+1$~\cite{DBLP:journals/dcg/DujmovicW07,DBLP:journals/dam/GanleyH01}. 

\begin{figure}[t]
\centering
\subfigure[~]{\includegraphics[width=0.3\linewidth,page=1]{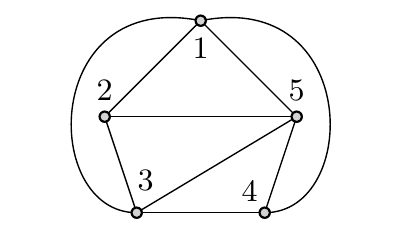}}
\hfil
\subfigure[~]{\includegraphics[width=0.3\linewidth,page=2]{figs/intro}}
\hfil
\subfigure[~]{\includegraphics[width=0.3\linewidth,page=3]{figs/intro}}
\vspace{-0.3cm}
\caption{(a) A planar graph $G$ with book thickness two. (b) A $2$-page book embedding of $G$. (c) A linear order of $G$ such that its fixed-order book thickness is three (and the corresponding $3$-page book embedding). \vspace{-0.5cm}}
\label{fig:intro}
\end{figure}

Given a graph $G$ and a positive integer $k$, the problem of determining whether $\bt(G) \le k$, called \textsc{Book Thickness}, is known to be \NP-complete. Namely, Bernhart and Kainen~\cite{DBLP:journals/jct/BernhartK79} proved that $\bt(G) \le 2$ if and only if $G$ is subhamiltonian, i.e., $G$ is a subgraph of a planar Hamiltonian graph. Since deciding whether a graph is subhamiltonian is an \NP-complete problem, \textsc{Book Thickness} is also \NP-complete in general~\cite{doi:10.1137/0608002}. \textsc{Book Thickness} has been studied also when the linear order $\prec$ of the vertices is fixed, indeed, this is one of the original formulations of the problem, which arises in the context of sorting with parallel stacks~\cite{doi:10.1137/0608002}. We call this problem \textsc{Fixed-Order Book Thickness} and we denote by $\fobt(G,\prec)$ the \emph{fixed-order book thickness} of a graph $G$.
Obviously, we have $\fobt(G,\prec) \ge \bt(G)$, see \cref{fig:intro}. 
Deciding whether $\fobt(G,\prec) \le 2$ corresponds to testing the bipartiteness of a suitable conflict graph, and thus it can be solved in linear time. On the other hand, deciding if $\fobt(G,\prec) \le 4$  is equivalent to finding a 4-coloring of a circle graph and hence is an \NP-complete problem~\cite{DBLP:conf/stacs/Unger92}.

\smallskip\noindent\textbf{Our Results.} 
In this paper we study the parameterized complexity of \textsc{Book Thickness} and \textsc{Fixed-Order Book Thickness}. 
For both problems, when the answer is positive, we naturally also expect to be able to compute a corresponding $k$-page book embedding as a witness. While both problems are \NP-complete already for small fixed values of $k$ on general graphs, it is natural to ask which structural properties of the input (formalized in terms of structural parameters) allow us to solve these problems efficiently. 
Indeed, already Dujmovic and Wood~\cite{DBLP:journals/dmtcs/WoodD11} asked  whether \textsc{Book Thickness} can be solved in polynomial time when the input graph has bounded treewidth~\cite{DBLP:journals/jal/RobertsonS86}---a question which has turned out to be surprisingly resilient to existing algorithmic techniques and remains open to this day. Bannister and Eppstein~\cite{DBLP:journals/jgaa/BannisterE18} made partial progress towards answering Dujmovic and Wood's question by showing that \textsc{Book Thickness} is fixed-parameter tractable parameterized by the treewidth of $G$ when $k=2$.

We provide the first fixed-parameter algorithms for \textsc{Fixed-Order Book Thickness} and also the first such algorithm for \textsc{Book Thickness} that can be used when $k>2$. In particular, we provide fixed-parameter algorithms for:

\begin{enumerate}
\item \textsc{Fixed-Order Book Thickness} parameterized by the vertex cover number of the graph;
\item \textsc{Fixed-Order Book Thickness} parameterized by the pathwidth of the graph and the vertex order; and
\item \textsc{Book Thickness} parameterized by the vertex cover number of the graph.
\end{enumerate}

Results 1 and 2 are obtained by combining dynamic programming techniques with insights about the structure of an optimal book embedding. Result 3 then applies a kernelization technique to obtain an equivalent instance of bounded size (which can then be solved, e.g., by brute force). All three of our algorithms can also output a corresponding $k$-page book embedding as a witness (if it exists).

The remainder of this paper is organized as follows. \cref{preliminaries} contains preliminaries and basic definitions. Results~1 and~2 on \textsc{Fixed-Order Book Thickness} are presented in \cref{fixed-ordering}, while Result~3 on \textsc{Book Thickness} is described in \cref{without-ordering}. 
Conclusions and open problems are found in \cref{conclusions}. 
Statements with a proof in the appendix are marked by an asterisk (*).

\section{Preliminaries}\label{preliminaries}
We use standard terminology from graph theory~\cite{DBLP:books/daglib/0030488}.
For $r \in \mathbb{N}$, we write $[r]$ as shorthand for the set $\{1, \ldots, r\}$.
Parameterized complexity~\cite{DBLP:books/sp/CyganFKLMPPS15,DBLP:series/txcs/DowneyF13} focuses on the study of problem complexity not only with respect to the input size $n$ but also a parameter $k \in \mathbb{N}$.
The most desirable complexity class in this setting is 
\FPT\ (\emph{fixed-parameter tractable}), which contains all problems that can be solved by an algorithm
running in time $f(k)\cdot n^{\bigoh(1)}$, where $f$ is a computable
function. Algorithms running in this time are called \emph{fixed-parameter algorithms}.

A $k$-page book embedding of a graph $G=(V,E)$ will be denoted by a pair $\langle \prec, \sigma \rangle$, where $\prec$ is a linear order of $V$, and $\sigma \colon E \rightarrow [k]$ is a function that maps each edge of $E$ to one of $k$ pages $[k] = \{1, 2, \dots, k\}$. In a $k$-page book embedding $\langle \prec, \sigma \rangle$ it is required that for no pair of edges $uv, wx \in E$ with $\sigma(uv) = \sigma(wx)$ the vertices are ordered as $u \prec w \prec v \prec x$, i.e., each page is crossing-free. 

We consider two graph parameters for our algorithms. 
A \emph{vertex cover} $C$ of a graph $G=(V,E)$ is a subset  $C \subseteq V$ such that each edge in $E$ has at least one end-vertex in $C$. The \emph{vertex cover number} of $G$, denoted by $\tau(G)$, is the size of a minimum vertex cover of $G$.
The second parameter is \emph{pathwidth},
a classical graph parameter~\cite{DBLP:journals/jct/RobertsonS83} 
which admits several equivalent definitions. The definition that will be most useful here
is the one tied to linear orders~\cite{DBLP:journals/ipl/Kinnersley92}; see also~\cite{LodhaOS17,Mallach17} for recent works using this formulation.
Given an $n$-vertex graph $G=(V,E)$ with a linear order $\prec$ of $V$ such that $v_1\prec v_2 \prec \dots \prec v_n$, the \emph{pathwidth} of $(G,\prec)$ is the minimum number $\kappa$ such that for each vertex $v_i$ ($i\in [n]$), there are at most $\kappa$ vertices left of $v_i$ that are adjacent to $v_i$ or a vertex right of $v_i$.
Formally, for each $v_i$ we call the set $P_i=\{ v_j \mid j<i, \exists q \geq i$ such that $v_j v_q \in E\}$ the \emph{guard set} for $v_i$, and the pathwidth of $(G,\prec)$
is simply $\max_{i\in [n]} |P_i|$.
The elements of the guard sets are called the \emph{guards} (for $v_i$).
We remark that the pathwidth of $G$ is equal to the minimum pathwidth over all linear orders $\prec$.

\section{Algorithms for \textsc{Fixed-Order Book Thickness} }\label{fixed-ordering}
Recall that in \textsc{Fixed-Order Book Thickness} the input consists of a graph $G=(V,E)$, a linear order $\prec$ of $V$, and a positive integer $k$. We assume that $V=\{v_1, v_2, \dots, v_n\}$ is indexed such that $i < j \Leftrightarrow v_i \prec v_j$. 
The task is to decide if there is a page assignment $\sigma \colon E \rightarrow [k]$ such that $\langle \prec, \sigma \rangle$ is a $k$-page book embedding of $G$, i.e., whether $\fobt(G,\prec) \le k$. If the answer is `YES' we shall return a corresponding $k$-page book embedding as a witness. In fact, our algorithms will return a book embedding with the minimum number of pages.

\subsection{Parameterization by the Vertex Cover Number}\label{sec:vc_fixed}

As our first result, we will show that \textsc{Fixed-Order Book Thickness} is fixed-parameter tractable when parameterized by the \emph{vertex cover number}. 
We note that the vertex cover number is a graph parameter which, while restricting the structure of the graph in a fairly strong way, has been used to obtain fixed-parameter algorithms for numerous difficult problems~\cite{DBLP:journals/jgaa/BannisterCE18,DBLP:conf/isaac/FellowsLMRS08,DBLP:journals/dmtcs/Ganian15}.

Let $C$ be a minimum vertex cover of size $\tau = \tau(G)$;
we remark that such a vertex cover $C$ can be computed in time $\bigoh(2^\tau+\tau\cdot n)$~\cite{DBLP:journals/tcs/ChenKX10}. 
Moreover, let $U=V\setminus C$.
Our first observation shows that the problem becomes trivial if $\tau \le k$.

\begin{observation}\label{obs-vcorder}
Every $n$-vertex graph $G$ with a vertex cover $C$ of size $k$ admits a $k$-page book embedding with any vertex order $\prec$. Moreover, 
if $G$ and $C$ are given as input, such a book embedding can be computed in $\bigoh(n+k\cdot n)$ time.
\end{observation}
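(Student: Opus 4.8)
The plan is to assign each edge to a page determined by one of its endpoints inside the vertex cover, exploiting the fact that two edges sharing a common endpoint can never cross in a book embedding.

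First I would fix an arbitrary labeling $C = \{c_1, \ldots, c_k\}$ of the vertex cover. For every edge $e \in E$, since $C$ is a vertex cover at least one endpoint of $e$ lies in $C$; I set $\sigma(e) = i$, where $c_i$ is such an endpoint (breaking ties by smallest index). This yields a page assignment $\sigma \colon E \rightarrow [k]$ and hence a candidate book embedding $\langle \prec, \sigma \rangle$ for the given (and arbitrary) linear order $\prec$.

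Next I would verify that every page is crossing-free. Any two distinct edges receiving the same page $i$ both have $c_i$ as an endpoint by construction. A crossing in a book embedding requires four \emph{distinct} vertices ordered as $u \prec w \prec v \prec x$ with edges $uv$ and $wx$ on the same page; two edges sharing an endpoint involve only three distinct vertices and therefore cannot realize this alternating pattern. Consequently no two edges on the same page cross, independently of the choice of $\prec$, which establishes that $\langle \prec, \sigma \rangle$ is a valid $k$-page book embedding.

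Finally, for the running time I would bound the number of edges. Since every edge is incident to $C$, each edge is counted at least once when summing the degrees of cover vertices, so $m \le \sum_{i=1}^{k} \deg(c_i) \le k(n-1) = \bigoh(k \cdot n)$. After marking the cover vertices in $\bigoh(n)$ time, computing $\sigma$ is a single pass over the edges assigning each in constant time, giving total time $\bigoh(n + k \cdot n)$. The argument involves no genuine obstacle; the only point requiring (minimal) care is the observation that shared endpoints preclude crossings, together with the degree-sum bound on $m$ needed to match the stated running time.
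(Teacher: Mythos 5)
Your proposal is correct and follows essentially the same approach as the paper: assign each edge to a page indexed by one of its cover endpoints (the paper uses the larger index for cover--cover edges, you use the smaller, which is immaterial), and observe that all edges on a page share that endpoint and hence cannot cross. Your added justifications (the four-distinct-vertices argument for why shared endpoints preclude crossings, and the degree-sum bound $m = \bigoh(k \cdot n)$ for the running time) merely make explicit what the paper leaves implicit.
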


\begin{proof}
Let $C = \{c_1,\ldots,c_k\}$ be a vertex cover of size $k$ and let $\sigma$ be a page assignment on $k$ pages defined as follows. 
For each $i\in [k]$ all edges $u c_i$ with $u \in U\cup \{c_1,\dots,c_{i-1}\}$ are assigned to page $i$. 
Now, consider the edges assigned to any page $i \in [k]$. 
By construction, they are all incident to vertex $c_i$, and thus no two of them cross each other. 
Therefore, the pair $\langle \prec, \sigma \rangle$ is a $k$-page book embedding of $G$ and can be computed in $\bigoh(n+k\cdot n)$ time.\qed
\end{proof}

We note that the bound given in Observation~\ref{obs-vcorder} is tight, since it is known that complete bipartite graphs with bipartitions of size $k$ and $h>k(k-1)$ have book thickness $k$~\cite{DBLP:journals/jct/BernhartK79} and vertex cover number $k$.

We now proceed to a description of our algorithm. For ease of presentation, we will add to $G$ an additional vertex of degree $0$, add it to $U$, and place it at the end of $\prec$ (observe that this does not change the solution to the instance).

If $\tau \le k$ then we are done by \cref{obs-vcorder}. Otherwise, let $S$ be the set of all possible non-crossing page assignments of the edges whose both endpoints lie in $C$, and note that $|S|< \tau^{\tau^2}$ and $S$ can be constructed in time $\bigoh(\tau^{\tau^2})$ (recall that $k < \tau$ by assumption). As its first step, the algorithm branches over each choice of $s\in S$, where no pair of edges assigned to the same page crosses.

For each such non-crossing assignment $s$, the algorithm performs a
dynamic programming procedure that runs
on the vertices of the input graph in sequential (left-to-right) order. 
We will define a record set that the algorithm is going to compute for each individual vertex in left-to-right order. 
Let $c_1 \prec \ldots \prec c_\tau$ be the ordering of vertices of $C$,
and let $u_1 \prec \ldots \prec u_{n-\tau}$ be the ordering of vertices of $U$.

In order to formalize our records, we need the notion of visibility. Let $i\in [n-\tau]$ and let $E_i=\{u_j c\in E \mid j< i,\, c \in C\}$ be the set of all edges with one endpoint outside of $C$ that lies to the left of $u_i$. We call $\alpha \colon E_i\rightarrow [k]$ a \emph{valid partial page assignment} if $\alpha\cup s$ maps edges to pages in a non-crossing fashion.
Now, consider a valid partial page assignment $\alpha \colon E_i\rightarrow [k]$. We say that a vertex $c\in C$ is \emph{$(\alpha,s)$-visible} to $u_t$ (for $t\in [n-\tau]$) on page $p$ if it is possible to draw an edge from $u_t$ to $c$ on page $p$ without crossing any other edge mapped to page $p$ by $\alpha \cup s$. \cref{fig:visibility} shows the visibilities of a vertex in two pages.

Based on this notion of visibility, for an index $a\in [n-\tau]$ 
we can define a $k \times \tau$ \emph{visibility matrix}
$M_i(a,\alpha,s)$, where an entry $(p,b)$ of $M_i(a,\alpha,s)$ is $1$ if $c_b$ is $(\alpha, s)$-visible to $u_a$ on page $p$ and $0$ otherwise (see \cref{fig:visibility}). 
Intuitively, this visibility matrix captures information about the reachability via crossing-free edges (i.e., \emph{visibility}) to the vertices in $C$ from $u_a$ on individual pages given a particular assignment $\alpha$ of edges in $E_i$.
Note that for a given tuple $(i,a,\alpha,s)$, it is straightforward to compute %
$M_i(a,\alpha,s)$ in polynomial time.

\begin{figure}[tb]
	\centering
		\includegraphics[scale=1]{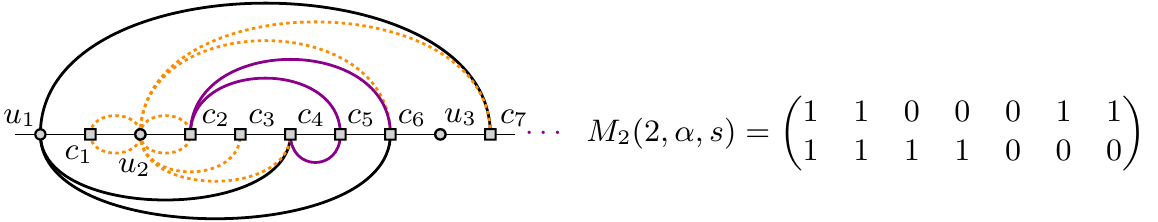}
	\caption{A partial 2-page book embedding of a graph $G$ with a vertex cover $C$ of size 7. The visibilities of vertices in $C$ (squares) from $u_2$ are marked by dashed edges (left). Corresponding visibility matrix $M_2(2,\alpha,s)$ %
	(right).\vspace{-0.5cm}}
	\label{fig:visibility}
\end{figure} 

Observe that while the number of possible choices of valid partial page assignments $\alpha \colon E_i\rightarrow [k]$ (for some $i\in [n-\tau]$) is not bounded by a function of $\tau$, for each $i,a\in [n-\tau]$ the number of possible visibility matrices is upper-bounded by $2^{\tau^2}$. 
On a high level, the core idea in the algorithm is to dynamically process the vertices in $U$ in a left-to-right fashion and compute, for each such vertex, a bounded-size ``snapshot'' of its visibility matrices---whereas for each such snapshot we will store only one (arbitrarily chosen) valid partial page assignment. We will later (in Lemma~\ref{lem:vcordercorrect}) show that all valid partial page assignments leading to the same visibility matrices are ``interchangeable''.

With this basic intuition, we can proceed to formally defining our records. Let $X=\{x\in [n-\tau] \mid \exists c\in C: u_x $ is the immediate successor of $c$ in $\prec\}$ be the  set of indices of vertices in $U$ which occur immediately after a cover vertex; we will denote the integers in $X$ as $x_1,\dots, x_z$ (in ascending order), and we note that $z\leq \tau$. For a vertex $u_i\in U$, we define our record set as follows: 
$\RRR_i(s)=\{ \big(M_i(i,\alpha,s),M_i(x_1,\alpha,s),M_i(x_2,\alpha,s),\dots,M_i(x_z,\alpha,s)\big) \mid
\exists \text{ valid partial page assignment } \alpha \colon E_i\rightarrow [k]\}$. Note that each entry in $\RRR_i(s)$ captures one possible set (a ``snapshot'') of at most $\tau+1$ visibility matrices: the visibility matrix for $u_i$ itself, and the visibility matrices for the $z$ non-cover vertices which follow immediately after the vertices in $C$. The intuition behind these latter visibility matrices is that they allow us to update our visibility matrix when our left-to-right dynamic programming algorithm reaches a vertex in $C$ (in particular, as we will see later, for $i\in X$ it is not possible to update the visibility matrix $M_i(i,\alpha,s)$ only based on $M_{i-1}(i-1,\alpha,s)$). Along with $\RRR_i(s)$, we also store a mapping $\Lambda_i^s$ from $\RRR_i(s)$ to valid partial page assignments of $E_i$ which maps $(M_0,\dots,M_{z})\in \RRR_i(s)$ to some $\alpha$ such that $(M_0,\dots,M_{z})=(M_i(i,\alpha,s),M_i(x_1,\alpha,s),M_i(x_2,\alpha,s),\dots,M_i(x_z,\alpha,s))$.

Let us make some observations about our records $\RRR_i(s)$. First, $|\RRR_i(s)|\leq 2^{\tau^3+\tau^2}$. 
Second, if $\RRR_{n-\tau}(s)\neq \emptyset$ for some $s$, since $u_{n-\tau}$ is a dummy vertex of degree 0, then there is a valid partial page assignment $\alpha \colon E_{n-\tau} \rightarrow [k]$ such that $s\cup \alpha$ is a non-crossing page assignment of \emph{all} edges in $G$. Hence we can output a $k$-page book embedding by invoking $\Lambda_{n-\tau}^s$ on any entry in $\RRR_{n-\tau}(s)$. Third:

\newcommand{\obstwo}{If for all $s\in S$ it holds that $\RRR_{n-\tau}(s)=\emptyset$, then $(G,\prec,k)$ is a NO-instance of \textsc{Fixed-Order Book Thickness}.}
\begin{observation}[*]\label{obstwo}
\obstwo
\end{observation}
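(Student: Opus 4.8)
The plan is to prove the contrapositive: assuming that $(G,\prec,k)$ is a YES-instance, I will exhibit some $s\in S$ for which $\RRR_{n-\tau}(s)\neq\emptyset$. Since $\RRR_{n-\tau}(s)$ is by definition the collection of all visibility-matrix snapshots realized by \emph{some} valid partial page assignment of $E_{n-\tau}$, it suffices to produce a single valid partial page assignment together with a matching choice of $s$, and to read off the snapshot it induces.

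First I would fix a witnessing $k$-page book embedding $\langle \prec, \sigma\rangle$ of $G$, whose existence is guaranteed by the YES-instance assumption, and then split $\sigma$ along the vertex cover $C$. Concretely, let $s$ be the restriction of $\sigma$ to the edges having both endpoints in $C$, and let $\alpha$ be the restriction of $\sigma$ to the edges having exactly one endpoint in $C$ (equivalently, exactly one endpoint in $U$). I then need to verify two membership claims: that $s\in S$, and that $\alpha$ is a valid partial page assignment of $E_{n-\tau}$.

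For the first claim, since $\sigma$ is crossing-free on every page, its restriction $s$ to the cover--cover edges is also crossing-free, so $s$ belongs to the set $S$ of all non-crossing page assignments of the edges within $C$. For the second claim, the key point is that $C$ is a vertex cover, so every edge of $G$ is incident to $C$; the edges not already captured by $s$ are precisely those with an endpoint in $U$, i.e.\ edges of the form $u_j c$ with $c\in C$. Because the appended vertex $u_{n-\tau}$ has degree $0$, the set $E_{n-\tau}=\{u_j c\in E\mid j<n-\tau,\ c\in C\}$ already contains \emph{all} of these $U$--$C$ edges, whence $\alpha\cup s=\sigma$. As $\sigma$ is non-crossing, so is $\alpha\cup s$, which is exactly the condition for $\alpha$ to be a valid partial page assignment of $E_{n-\tau}$.

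Having established both claims, the snapshot $\big(M_{n-\tau}(n-\tau,\alpha,s),M_{n-\tau}(x_1,\alpha,s),\dots,M_{n-\tau}(x_z,\alpha,s)\big)$ induced by $\alpha$ is, by the definition of the record set, an element of $\RRR_{n-\tau}(s)$, so $\RRR_{n-\tau}(s)\neq\emptyset$ — contradicting the hypothesis that every record set at index $n-\tau$ is empty. I do not expect a genuine obstacle here, as the argument is essentially bookkeeping against the definitions of $S$, of valid partial page assignments, and of $\RRR$. The one place that needs care is verifying that $E_{n-\tau}$ really captures \emph{every} non-cover edge — this is precisely the reason the degree-$0$ dummy vertex was appended at the end of $\prec$ — since otherwise an edge incident to the last genuine vertex of $U$ could escape $\alpha$ and break the identity $\alpha\cup s=\sigma$ on which the whole reduction rests.
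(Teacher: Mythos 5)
Your proof is correct and follows essentially the same approach as the paper's: restrict the witnessing embedding to the cover--cover edges to get $s$ and to the remaining edges to get $\alpha$, then observe that the induced snapshot lies in $\RRR_{n-\tau}(s)$. The extra care you take in checking that $E_{n-\tau}$ captures all $U$--$C$ edges (via the dummy vertex) is a detail the paper leaves implicit, but it is the same argument.
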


The above implies that in order to solve our instance, it suffices to compute $\RRR_{n-\tau}(s)$ for each $s\in S$. As mentioned earlier, we do this dynamically, with the first step consisting of the computation of $\RRR_1(s)$. Since $E_1=\emptyset$, the visibility matrices $M_1(1,\emptyset,s),M_1(x_1,\emptyset,s),\dots,M_1(x_z,\emptyset,s)$ required to populate $\RRR_1(s)$ depend only on $s$ and are easy to compute in polynomial time.

Finally, we proceed to the dynamic step. Assume we have computed $\RRR_{i-1}(s)$. 
We branch over each possible page assignment $\beta$ of the (at most $\tau$) edges incident to $u_{i-1}$, and each tuple $\rho\in \RRR_{i-1}(s)$. 
For each such $\beta$ and $\gamma=\Lambda_{i-1}^s(\rho)$, we check whether $\beta\cup \gamma$ is a valid partial page assignment (i.e., whether $\beta\cup \gamma \cup s$ is non-crossing); if this is not the case, we discard this pair of $(\beta,\rho)$. Otherwise we compute the visibility matrices $M_i(i,\beta\cup \gamma,s), M_i(x_1,\beta\cup \gamma,s), \dots, M_i(x_z,\beta\cup \gamma,s)$, add the corresponding tuple into $\RRR_i(s)$, and set $\Lambda_i^s$ to map this tuple to $\beta\cup \gamma$. We remark that here the use of $\Lambda_{i-1}^s(\rho)$ allows us not to distinguish between $i\in X$ and $i\not \in X$---in both cases, the partial page assignment $\gamma$ will correctly capture the visibility matrix for $u_i$.

\begin{lemma}
\label{lem:vcordercorrect}
The above procedure correctly computes $\RRR_i(s)$ from $\RRR_{i-1}(s)$.
\end{lemma}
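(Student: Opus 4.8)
The plan is to prove \cref{lem:vcordercorrect} by showing two inclusions: that every tuple computed by the dynamic step genuinely belongs to $\RRR_i(s)$ (soundness), and that every tuple in $\RRR_i(s)$ is in fact produced by the procedure (completeness). Soundness is the easier direction: whenever the procedure adds a tuple, it has just verified that $\beta \cup \gamma \cup s$ is non-crossing, so $\alpha := \beta \cup \gamma$ is a valid partial page assignment on $E_i$, and the added tuple is by construction exactly $\big(M_i(i,\alpha,s),M_i(x_1,\alpha,s),\dots,M_i(x_z,\alpha,s)\big)$, witnessing membership in $\RRR_i(s)$.

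The completeness direction is where the real content lies, and it is where I expect the main obstacle to be. Fix an arbitrary tuple $\rho^* \in \RRR_i(s)$, witnessed by some valid partial page assignment $\alpha^* \colon E_i \rightarrow [k]$. First I would split $\alpha^*$ as $\beta^* \cup \alpha'$, where $\beta^*$ is the restriction of $\alpha^*$ to the edges incident to $u_{i-1}$ and $\alpha'$ is the restriction to $E_{i-1}$ (these sets partition $E_i$, since $E_i = E_{i-1} \cup \{u_{i-1}c \in E \mid c \in C\}$). The page assignment $\alpha'$ is a valid partial page assignment on $E_{i-1}$, so it induces some tuple $\rho' \in \RRR_{i-1}(s)$. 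The procedure does branch over $\beta^*$ and over $\rho'$, so it will consider the pair $(\beta^*, \rho')$; however, the partial assignment it uses is not $\alpha'$ itself but $\gamma := \Lambda_{i-1}^s(\rho')$, which is merely \emph{some} valid partial page assignment inducing the same tuple $\rho'$, not necessarily $\alpha'$.

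The crux, therefore, is to argue that $\gamma$ is ``interchangeable'' with $\alpha'$ in the following precise sense: (i) $\beta^* \cup \gamma \cup s$ is non-crossing (so the pair is not discarded), and (ii) the resulting tuple $\big(M_i(i,\beta^*\cup\gamma,s),\dots,M_i(x_z,\beta^*\cup\gamma,s)\big)$ equals $\rho^*$. The key insight enabling both claims is that the only information about $\alpha'$ relevant to the step at $u_i$ is the visibility of cover vertices as seen from $u_i$ and from the post-cover vertices $u_{x_1},\dots,u_{x_z}$ --- and this is exactly what the tuple $\rho'$ records. Concretely, I would prove the following: whether an edge from $u_{i-1}$ (or from any later vertex we care about) to a cover vertex $c$ on page $p$ crosses an edge of $\alpha'$ depends only on whether $c$ was visible on page $p$ from the appropriate reference vertex, and since $\gamma$ and $\alpha'$ induce the same visibility matrices at $u_{i-1}$ (which is $u_i$'s predecessor) and at each $u_{x_j}$, they are indistinguishable for this purpose. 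For (i), since $\beta^*$ consists of edges incident to $u_{i-1}$, a crossing between $\beta^*$ and $\gamma$ on page $p$ would correspond to a cover vertex being non-visible from $u_{i-1}$ under $\gamma$; but visibility from $u_{i-1}$ under $\gamma$ matches that under $\alpha'$ (recorded in $\rho'$, via the component $M_{i-1}(i-1,\cdot,s)$), and $\beta^* \cup \alpha' \cup s$ is non-crossing, so no such crossing occurs. For (ii), I would argue that each visibility matrix $M_i(a,\beta^*\cup\gamma,s)$ is determined by $\beta^*$, $s$, and the visibility data for $a$ already present in $\rho'$ --- the subtle case being $a = i$ when $i \in X$, which is precisely why the record carries the post-cover matrices $M_{i-1}(x_j,\cdot,s)$: they let us reconstruct $u_i$'s visibility after crossing the cover vertex $c = u_i$'s predecessor, without needing $\alpha'$ explicitly. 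Establishing this determinacy rigorously --- tracking how visibility propagates from one vertex to the next and how it is ``reset'' at cover vertices --- is the main technical hurdle, and I would handle the two regimes $i \in X$ and $i \notin X$ as (symmetric but) separate sub-arguments.
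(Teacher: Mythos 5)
Your proposal is correct and follows essentially the same route as the paper's proof: the same split of the witnessing assignment into the edges incident to $u_{i-1}$ and the rest, the same appeal to $\RRR_{i-1}(s)$ containing the induced tuple, and the same interchangeability argument via equality of the stored visibility matrices (for $u_{i-1}$ to justify validity, and for each $u_{x_j}$ to justify equality of the resulting tuples). Your treatment is, if anything, slightly more explicit than the paper's about why a crossing of a $\beta$-edge corresponds exactly to a non-visibility entry and about the role of the post-cover matrices when $i\in X$.
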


\begin{proof}
Consider an entry $(M_0,\dots,M_{z})$ computed by the above procedure from some $\beta\cup \gamma$. Since we explicitly checked that $\beta\cup \gamma$ is a valid partial page assignment, this implies that $(M_0,\dots,M_{z})\in \RRR_i(s)$, as desired.

For the opposite direction, consider a tuple $(M_0,\dots,M_{z})\in \RRR_i(s)$. By definition, there exists some valid partial page assignment $\alpha$ of $E_i$ such that $M_0=M_i(i,\alpha,s)$, $M_1=M_i(x_1,\alpha,s)$, \dots, $M_{z}=M_i(x_z,\alpha,s)$. Now let $\beta$ be the restriction of $\alpha$ to the edges incident to $u_{i-1}$, and let $\gamma'$ be the restriction of $\alpha$ to all other edges (i.e., all those not incident to $u_{i-1}$). Since $\gamma'\cup s$ is non-crossing and in particular $\gamma'$ is a valid partial page assignment for $E_{i-1}$, $\RRR_{i-1}(s)$ must contain an entry $\omega=\left(M_{i-1}(i-1,\gamma',s),\dots,(M_{i-1}(x_z,\gamma',s)\right)$---let $\gamma=\Lambda_{i-1}^s(\omega)$. 

To conclude the proof, it suffices to show that (1) $\beta\cup \gamma$ is a valid partial page assignment, and (2) $(M_i(i,\beta\cup\gamma',s),\dots,M_i(x_z,\beta\cup\gamma',s))$, which is the original tuple corresponding to the hypothetical $\alpha$, is \emph{equal} to $(M_i(i,\beta\cup\gamma,s),\dots,M_i(x_z,\beta\cup\gamma,s))$, which is the entry our algorithm computes from $\beta$ and $\gamma$. 
Point (1) follows from the fact that $M_{i-1}(i-1,\gamma',s)=M_{i-1}(i-1,\gamma,s)$ in conjunction with the fact that $u_{i-1}$ is adjacent only to vertices in $C$. Point (2) then follows by the same argument, but applied to each visibility matrix in the respective tuples: for each $x\in X$ we have $M_{i-1}(x,\gamma',s)=M_{i-1}(x,\gamma,s)$---meaning that the visibilities of $u_x$ were identical before considering the edges incident to $u_{i-1}$---and so assigning these edges to pages as prescribed by $\beta$ leads to an identical outcome in terms of visibility.
\qed
\end{proof}

This proves the correctness of our algorithm. The runtime is upper-bounded by the product of $|S|< \tau^{\tau^2}$ (the initial branching factor), $n$ (the number of times we compute a new record set $\RRR_{i}(s)$), and $2^{\tau^3+\tau^2}\cdot \tau^\tau$ (to consider all combinations of $\gamma$ and $\beta$ so to compute a new record set from the previous one).
A minimum-page book embedding can be computed by trying all possible choices for $k\in [\tau]$. We summarize Result 1 below.

\begin{theorem}
\label{thm:vcnfixed}
There is an algorithm which takes as input an $n$-vertex graph $G$ with a vertex order $\prec$, runs in time $2^{\bigoh(\tau^3)}\cdot n$ where $\tau$ is the vertex cover number of $G$, and computes a page assignment $\sigma$ such that $(\prec,\sigma)$ is a $(\fobt(G,\prec))$-page book embedding of $G$.
\end{theorem}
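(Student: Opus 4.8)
The plan is to assemble \cref{thm:vcnfixed} directly from the components already established in this subsection, since essentially all of the mathematical work has been done; what remains is to package the correctness and runtime claims and to explain the extraction of a \emph{minimum}-page embedding. First I would dispose of the trivial regime: if $\tau \le k$, then \cref{obs-vcorder} already produces a $k$-page embedding in linear time. So I may assume $k < \tau$, which is exactly the regime in which the dynamic-programming machinery of \cref{lem:vcordercorrect} and \cref{obstwo} operates.

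The correctness argument proceeds by fixing a target value $k$ and running the branching-plus-DP procedure. Initialization of $\RRR_1(s)$ for each $s \in S$ is immediate since $E_1 = \emptyset$. An induction on $i$, with \cref{lem:vcordercorrect} as the inductive step, shows that the algorithm computes $\RRR_i(s)$ correctly for every $i$ and every $s \in S$; in particular it computes $\RRR_{n-\tau}(s)$ for all $s$. I would then invoke the three recorded observations about $\RRR_{n-\tau}(s)$: if some $\RRR_{n-\tau}(s)\neq\emptyset$, then (using that $u_{n-\tau}$ is the degree-$0$ dummy vertex) applying $\Lambda_{n-\tau}^s$ to any entry yields a valid non-crossing page assignment of \emph{all} edges, hence a $k$-page embedding respecting $\prec$; conversely, by \cref{obstwo}, if every $\RRR_{n-\tau}(s)$ is empty then $(G,\prec,k)$ is a NO-instance. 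This establishes that the procedure correctly decides $\fobt(G,\prec)\le k$ and outputs a witness when one exists.

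To obtain a \emph{minimum}-page embedding as promised, I would simply wrap the decision procedure in a loop over $k\in[\tau]$ (the upper bound $\tau$ being justified by \cref{obs-vcorder}, which guarantees a $\tau$-page embedding always exists) and return the embedding produced for the smallest $k$ that answers YES. This smallest such $k$ is by definition $\fobt(G,\prec)$. For the runtime, I would multiply the three factors already isolated in the text: the branching factor $|S| < \tau^{\tau^2}$ over non-crossing assignments of the cover-internal edges; the factor $n$ for the number of record sets computed along the left-to-right sweep; and the factor $2^{\tau^3+\tau^2}\cdot\tau^\tau$ for enumerating all pairs $(\rho,\beta)$ in the dynamic step, each handled in time polynomial in $\tau$ for the visibility-matrix computations. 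The outer loop over $k\in[\tau]$ contributes only a further factor of $\tau$. Collecting these, the dominant term is $2^{O(\tau^3)}$, giving total running time $2^{O(\tau^3)}\cdot n$.

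I do not expect a genuine obstacle here, precisely because the substantive content has been offloaded: the interchangeability of partial page assignments with identical visibility-matrix snapshots is exactly what \cref{lem:vcordercorrect} provides, and the bounded number $2^{\tau^2}$ of distinct visibility matrices is what keeps the record sets small. The only point requiring mild care is bookkeeping in the minimization loop — ensuring that the embedding is reconstructed via $\Lambda_{n-\tau}^s$ for the correct $s$ and the correct minimal $k$ — but this is routine. If anything counts as the main subtlety, it is simply confirming that the $2^{\tau^3+\tau^2}\cdot\tau^\tau$ per-step cost, rather than $|S|$, dominates the exponent and yields the clean $2^{O(\tau^3)}$ bound after folding in the branching and the minimization loop.
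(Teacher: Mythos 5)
Your proposal is correct and follows essentially the same route as the paper: the theorem is obtained by assembling \cref{obs-vcorder}, the branching over $s\in S$, the left-to-right dynamic program whose correctness is \cref{lem:vcordercorrect}, the YES/NO criteria via $\RRR_{n-\tau}(s)$ and \cref{obstwo}, the outer loop over $k\in[\tau]$ for minimization, and the same three-factor runtime bound. No gaps.
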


\subsection{Parameterization by the Pathwidth of the Vertex Ordering}\label{sec:pw_fixed} 

As our second result, we show that \textsc{Fixed-Order Book Thickness} is fixed-parameter tractable parameterized by the pathwidth of $(G,\prec)$. We note that while the pathwidth of $G$ is always upper-bounded by the vertex cover number, this does not hold when we consider a fixed ordering $\prec$, and hence this result is incomparable to \cref{thm:vcnfixed}. 
For instance, if $G$ is a path, it has arbitrarily large vertex cover number while $(G,\prec)$ may have a pathwidth of $1$, while on the other hand if $G$ is a star, it has a vertex cover number of $1$ while $(G,\prec)$ may have arbitrarily large pathwidth.
To begin, we can show that the pathwidth of $(G,\prec)$ provides an upper bound on the number of pages required for an embedding.

\newcommand{\lempathk}{Every $n$-vertex graph $G=(V,E)$ with a linear order $\prec$ of $V$ such that $(G,\prec)$ has pathwidth $k$ admits a $k$-page book embedding $\langle \prec, \sigma \rangle$, which can be computed in  $\bigoh(n+k\cdot n)$ time.}

\begin{lemma}[*]\label{lem-path-k}
\lempathk
\end{lemma}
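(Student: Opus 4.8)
The plan is to prove the stronger quantitative statement that $\fobt(G,\prec)\le k$ by reducing the page-assignment task to a coloring problem on intervals whose chromatic demand is controlled directly by the guard sets. The key idea is to assign the page of an edge \emph{solely according to its left endpoint}. Concretely, for a vertex $v_a$ that has at least one neighbor to its right, let $b^\ast_a$ be the position of its rightmost neighbor and associate with $v_a$ the \emph{lifetime} interval $I_a=[a,\,b^\ast_a-1]$, thought of as the range of ``gaps'' (positions between consecutive vertices) over which $v_a$ still has an edge extending to the right. First I would color these lifetimes with $k$ colors so that overlapping lifetimes receive distinct colors, and then set $\sigma(v_av_b)$ to the color of $v_a$ for every edge $v_av_b$ with $a<b$ (note every edge's left endpoint has a right-neighbor, so $\sigma$ is fully defined).

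Next I would verify correctness. Two edges placed on the same page either share their left endpoint---in which case they share the vertex $v_a$ and hence cannot cross---or they have distinct left endpoints that received the same color. In the latter case the lifetimes of the two left endpoints are disjoint, so the two edges are never simultaneously ``open''. Since any pair of crossing edges $v_av_b$, $v_cv_d$ with $a<c<b<d$ is simultaneously open at the gap immediately to the right of $v_c$ (there $a\le c$ and $b\ge c+1$, and $c\le c$ and $d\ge c+1$), such same-color edges cannot cross. Thus $\langle\prec,\sigma\rangle$ is a valid $k$-page book embedding.

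The crux is showing that the lifetimes can be colored with only $k$ colors, and this is exactly where the pathwidth assumption enters. I would argue that the set of vertices whose lifetime contains the gap immediately to the right of $v_g$ is \emph{precisely} the guard set $P_{g+1}$: indeed $I_a$ contains gap $g$ iff $a\le g$ and $b^\ast_a\ge g+1$, i.e.\ iff $a<g+1$ and $v_a$ has a neighbor at some position $\ge g+1$, which is the definition of membership in $P_{g+1}$. Hence at every gap at most $|P_{g+1}|\le k$ lifetimes overlap, so the interval graph formed by the lifetimes has clique number at most $k$; as interval graphs are perfect, a proper $k$-coloring exists. Constructively, a single left-to-right sweep maintaining a pool of $k$ pages---handing out a free page when a lifetime begins and returning it when a lifetime ends---produces such a coloring; because distinct vertices have distinct starting gaps, at most one lifetime begins per gap, and since fewer than $k$ other lifetimes are then active, a free page is always available.

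Finally, for the running time I would note that each vertex $v_b$ has at most $|P_b|\le k$ neighbors to its left, so $G$ has at most $k\cdot n$ edges; computing all values $b^\ast_a$, executing the sweep, and reading off $\sigma$ therefore runs in $\bigoh(n+k\cdot n)$ time. I expect the main obstacle to be conceptual rather than technical: recognizing the right reformulation---that coloring by left endpoints already suffices and that the resulting overlap structure coincides exactly with the guard sets---after which the $k$-page bound follows immediately from the perfectness of interval graphs.
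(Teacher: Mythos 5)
Your proposal is correct and is essentially the same argument as the paper's: both assign each edge to a page determined solely by its left endpoint and reuse pages via a greedy sweep in which the vertices simultaneously "active" at any point are exactly a guard set, hence at most $k$ of them. The paper phrases this as a right-to-left induction maintaining an injective map from the current guard set $P_i$ to $[k]$, which is the same page-reuse scheme as your left-to-right lifetime-interval coloring (the appeal to perfection of interval graphs is not even needed given your constructive sweep).
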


We note that the bound given in \cref{lem-path-k} is also tight for the same reason as for \cref{obs-vcorder}: complete bipartite graphs with bipartitions of size $k$ and $h>k(k-1)$ have book thickness $k$~\cite{DBLP:journals/jct/BernhartK79}, but admit an ordering $\prec$ with pathwidth~$k$. 

We now proceed to a description of our algorithm. Our input consists of the graph $G$, the vertex ordering $\prec$, and an integer $k$ that upper-bounds the desired number of pages in a book embedding. 
Let $\kappa$ be our parameter, i.e., the pathwidth of $(G,\prec)$; observe that due to \cref{lem-path-k}, we may assume that $k \le \kappa$.
The algorithm performs a dynamic programming procedure 
on the vertices $v_1,v_2,\dots,v_n$ of the input graph $G$ in right-to-left order along $\prec$. 
For technical reasons, we initially add a vertex $v_0$ of degree $0$ to $G$ and place it to the left of $v_1$ in $\prec$; note that this does not increase the pathwidth of $G$. 

We now adapt the concept of \emph{visibility} introduced in \cref{sec:vc_fixed} for use in this algorithm. 
First, let us expand our notion of guard set (see \cref{preliminaries}) as follows: for a vertex $v_i$, let $P^*_{v_i}=\{g^i_1,\dots,g^i_m\}$ where for each $j\in [m-1]$, $g^i_j$ is the $j$-th guard of $v_i$ in reverse order of $\prec$ (i.e., $g^i_1$ is the guard that is nearest to $v_i$ in $\prec$), and $g^i_m=v_0$.
For a vertex $v_i$,
let $E_i=\{v_av_b \mid v_av_b\in E,\, b>i\}$ be the set of all edges with at least one endpoint to the right of $v_i$ and let $S_i=\{g^i_jv_b \mid g^i_j\in P^*_{v_i},\, g^i_jv_b\in E_i\}$
be the restriction of $E_i$ to edges between a vertex to the right of $v_i$ %
and a guard in $P^*_{v_i}$.
An assignment $\alpha \colon E_i\rightarrow [k]$
is called a \emph{valid partial page assignment} if $\alpha$ maps the edges in $E_i$ to pages in a non-crossing manner. 
Given a valid partial page assignment $\alpha \colon E_i\rightarrow [k]$ and a vertex $v_a$ with $a\leq i$, 
we say a vertex $v_x$ ($x<a$) is 
\emph{$\alpha$-visible} to $v_a$ on a page~$p$ if it is possible 
to draw the edge $ v_a v_x$ in page $p$ without crossing
any other edge mapped to $p$ by $\alpha$. 

Before we proceed to describing our algorithm, we will show that the visibilities of vertices w.r.t.\ valid partial page assignments exhibit a certain regularity property. Given $a\leq i\leq n$, $p\in [k]$, and a valid partial page assignment $\alpha$ of $E_i$, let the \emph{$(\alpha,i,p)$-important} edge of $v_a$ be the edge $v_cv_d\in S_i$ with the following properties: 
\begin{inparaenum}[(1)]
\item $\alpha(v_cv_d)=p$,
\item $c<a$, and
\item $|a-c|$ is minimum among all such edges in $S_i$.
\end{inparaenum}
If multiple edges with these properties exist, we choose the edge with minimum $|d-c|$. Intuitively, the $(\alpha,i,p)$-important edge of $v_a$ is simply the shortest edge of $S_i$ which encloses $v_a$ on page $p$; note that it may happen that $v_a$ has no $(\alpha,i,p)$-important edge. Observe that, if the edge exists, its left endpoint is $v_c\in P^*_{v_i}$, and we call $v_c$ the \emph{$(\alpha,i,p)$-important} guard of $v_a$.
The next observation easily follows from the definition of $(\alpha,i,p)$-important edge, see also \cref{fig:range-visbility}.

\begin{figure}[tb]
\centering
\includegraphics[scale=1]{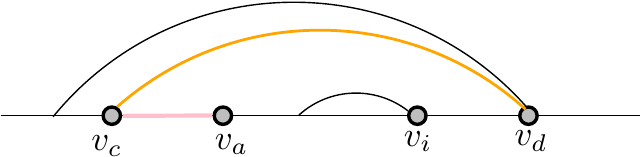}
\caption{An assignment of the edges of $S_i$ to a page~$p$, where the edge $v_cv_d$
is the $(\alpha,i,p)$-important edge of $v_a$. Any vertex $w$ with $v_c \prec w \prec v_a$ is visible to $v_a$, and any vertex $w' \prec v_c$ is not visible to $v_a$.}
\label{fig:range-visbility}
\end{figure}

\begin{observation}
\label{obs:pwvisibility}
If $v_a$ has no $(\alpha,i,p)$-important edge, then every vertex $v_x$ with $x<a$ is $\alpha$-visible to $v_a$. If the $(\alpha,i,p)$-important guard of $v_a$ is $v_c$, then $v_x$ $(x<a)$ is $\alpha$-visible to $v_a$ if and only if $x\geq c$.
\end{observation}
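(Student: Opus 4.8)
The plan is to prove both claims directly from the definition of the $(\alpha,i,p)$-important edge, reasoning about which potential edges $v_av_x$ can be drawn on page $p$ without crossing an existing edge of $\alpha$. Recall that an edge $v_av_x$ (with $x<a$) can be added to page $p$ if and only if it does not cross any edge $e$ already assigned to page $p$ by $\alpha$. Two edges cross precisely when their endpoints interleave along $\prec$, so I first want to characterize exactly which edges of $E_i$ on page $p$ could possibly obstruct the edge $v_av_x$.

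\emph{First claim (no important edge).} Suppose $v_a$ has no $(\alpha,i,p)$-important edge; I must show every $v_x$ with $x<a$ is $\alpha$-visible to $v_a$ on page $p$. Consider any edge $e=v_cv_d$ assigned to page $p$ by $\alpha$ that could cross $v_av_x$. For a crossing to occur, the endpoints must interleave, i.e.\ exactly one of $v_c,v_d$ lies strictly between $v_x$ and $v_a$. I would argue that any such potential obstruction forces an edge of $S_i$ with left endpoint to the left of $v_a$, contradicting the hypothesis that no important edge exists. The subtlety here is that $\alpha$ assigns \emph{all} of $E_i$, not just $S_i$, so I must confirm that only edges of $S_i$ (edges between a guard in $P^*_{v_i}$ and a vertex right of $v_i$) are relevant: any edge of $E_i$ that has an endpoint strictly between $v_x$ and $v_a$ and another endpoint on the far side must have its left endpoint be a guard of $v_i$ (since it spans across position $a\le i$), hence lies in $S_i$ and would be a candidate important edge. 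This is the step I expect to require the most care — pinning down that the guard set $P^*_{v_i}$ captures exactly the left endpoints of all edges that can enclose a vertex $v_a$ with $a\le i$.

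\emph{Second claim (with important edge $v_cv_d$).} Now suppose the $(\alpha,i,p)$-important guard of $v_a$ is $v_c$. For the forward direction, I show that if $x<c$ then $v_av_x$ crosses $v_cv_d$: since $v_c$ is the left endpoint of an edge with $\alpha(v_cv_d)=p$ and $c<a$, and since $d>i\ge a>x$ while $x<c<a$, the four endpoints interleave as $v_x\prec v_c\prec v_a\prec v_d$, giving a crossing, so $v_x$ is not visible. For the reverse direction, I assume $x\ge c$ and show $v_av_x$ crosses no page-$p$ edge. Any hypothetical obstructing edge $v_{c'}v_{d'}\in S_i$ on page $p$ would, by the same interleaving analysis, have to enclose $v_a$ with left endpoint $v_{c'}$ satisfying $c'<a$; by minimality of $|a-c|$ in the definition of the important edge we get $c\le c'$, but $c'$ must lie strictly between $v_x$ and $v_a$ to cause a crossing, i.e.\ $x<c'<a$, which combined with $x\ge c$ and $c\le c'$ needs a short argument to rule out. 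I would close this by observing that any edge with left endpoint in $[c,a)$ other than the important one is either nested inside $v_cv_d$ (and hence cannot separate $v_x$ from $v_a$ once $x\ge c$) or contradicts minimality, completing the equivalence.

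Overall the argument is a careful case analysis of endpoint interleavings, and the only genuinely nontrivial point is the reverse direction of the second claim together with the guard-set observation underpinning both claims; the remaining cases reduce to reading off the cyclic order of four points on the spine.
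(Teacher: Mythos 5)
The paper offers no proof of this observation at all---it is stated as following ``easily from the definition'' with a pointer to a figure---so your job was to supply the intended routine argument, and your case analysis of interleavings is indeed that argument. Your handling of the first claim and of the forward direction of the second claim is correct, and you correctly isolate the key point: any page-$p$ edge of $E_i$ that crosses $v_xv_a$ must interleave as $x<c'<a\le i<d'$, which forces $v_{c'}\in P^*_{v_i}$ and hence forces the edge into $S_i$ as a candidate important edge.

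The one genuine slip is in the reverse direction of the second claim: minimality of $|a-c|=a-c$ means $v_c$ is the candidate left endpoint \emph{closest to} $v_a$, so for any other candidate $v_{c'}$ you get $c'\le c$, not $c\le c'$ as you wrote. With the inequality in the correct direction the ``short argument'' you defer is immediate: a crossing requires $x<c'$, while $x\ge c\ge c'$ gives $x\ge c'$, a contradiction, so no obstructing edge exists. Your attempted repair via the dichotomy ``nested inside $v_cv_d$ or contradicts minimality'' reaches the right conclusion but for the wrong reason---an edge nested inside $v_cv_d$ with left endpoint strictly between $v_c$ and $v_a$ \emph{would} still separate $v_x$ from $v_a$; the real reason it is harmless is that it cannot exist (it would contradict minimality), and an edge with left endpoint exactly $v_c$ fails the crossing condition simply because $x\ge c$. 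Fix the inequality and the proof closes cleanly.
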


\cref{obs:pwvisibility} not only provides us with a way of handling vertex visibilities in the pathwidth setting, but also allows us to store all the information we require about vertex visibilities in a more concise way than via the matrices used in \cref{sec:vc_fixed}. For an index $i\in[n]$, a vertex $v_a$ where $a\leq i$ and a valid partial page assignment $\alpha$, we define the \emph{visibility vector} $U_i(v_a,\alpha)$ as follows: the $p$-th component of $U_i(v_a,\alpha)$ is the $(\alpha,i,p)$-important guard of $v_a$, and $\diamond$ if $v_a$ has no $(\alpha,i,p)$-important guard. Observe that since the number of pages is upper-bounded by $\kappa$ by assumption and the cardinality of $P^*_{v_i}$ is at most $\kappa+1$, there are at most $(\kappa+2)^\kappa$ possible distinct visibility vectors for any fixed $i$. 

Observe that thanks to \cref{obs:pwvisibility} the visibility vector $U_i(v_i,\alpha)$ provides us with complete information about the visibility of vertices $v_b$ ($b<i$) from $v_i$---notably, $v_b$ is not $\alpha$-visible to $v_i$ on page $p$ if and only if $v_b$ lies to the left of the $(\alpha, i, p)$-important guard $U_i(v_i,\alpha)[p]$ (and, in particular, if $U_i(v_i,\alpha)[p]=\diamond$ then every such $v_b$ is $\alpha$-visible to $v_i$ on page $p$). On a high level, the algorithm will traverse vertices in right-to-left order along $\prec$ and store the set of all possible visibility vectors at each vertex. To this end, it will use the following observation to update its visibility vectors.

\begin{observation}
\label{obs:vis-update}
Let $\alpha$ be a valid partial page assignment of $E_i$ and $p$ be a page. If $v_{i-1}\not\in P^*_{v_i}$, then a vertex $v_b$ $(b<i-1)$ is $\alpha$-visible to $v_{i-1}$ on page $p$ if and only if $v_b$ is $\alpha$-visible to $v_{i}$ on page $p$.
\end{observation}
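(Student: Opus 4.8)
The plan is to prove Observation~\ref{obs:vis-update} directly from the definitions, exploiting the hypothesis $v_{i-1}\notin P^*_{v_i}$. The key structural consequence of this hypothesis is that $v_{i-1}$ is \emph{not} a guard of $v_i$, which by the definition of the guard set means that $v_{i-1}$ has no neighbor $v_q$ with $q\geq i$. In other words, every edge incident to $v_{i-1}$ goes strictly to the left of $v_{i-1}$, and so no edge of $E_i$ has $v_{i-1}$ as an endpoint. The reason this matters is that the visibility of a vertex $v_b$ to some vertex $v_a$ on page $p$ is governed entirely by the edges of $S_i\subseteq E_i$ that are assigned to page $p$ by $\alpha$; since $v_{i-1}$ participates in no such edge, the set of page-$p$ edges relevant to visibility is identical whether we look from $v_i$ or from $v_{i-1}$.

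First I would unwind the definitions: fix a page $p$ and a vertex $v_b$ with $b<i-1$. By \cref{obs:pwvisibility}, the $\alpha$-visibility of $v_b$ to $v_i$ on page $p$ is controlled by the $(\alpha,i,p)$-important guard of $v_i$, namely $v_b$ is visible to $v_i$ iff $v_b$ lies at or to the right of that important guard (or the important guard does not exist, in which case $v_b$ is always visible). The same statement holds with $v_{i-1}$ in place of $v_i$. Thus it suffices to show that $v_i$ and $v_{i-1}$ have the \emph{same} $(\alpha,i,p)$-important guard (including the case where neither has one).

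Next I would compare the important edges of $v_i$ and $v_{i-1}$. An edge $v_cv_d\in S_i$ is a candidate important edge for $v_a$ precisely when $\alpha(v_cv_d)=p$ and $c<a$; among candidates one takes the one minimizing $|a-c|$, breaking ties by minimizing $|d-c|$. The only candidate edges that could distinguish $v_i$ from $v_{i-1}$ are those with left endpoint $v_c$ satisfying $c<i$ but $c\not<i-1$, i.e.\ $c=i-1$; equivalently, a candidate edge for $v_i$ but not $v_{i-1}$ would need to have $v_{i-1}$ as its left endpoint. But since $v_{i-1}$ is the endpoint of no edge in $E_i\supseteq S_i$, no such edge exists. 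Hence the set of page-$p$ candidate edges enclosing $v_i$ from the left coincides exactly with the set enclosing $v_{i-1}$ from the left, and the minimization (with the same tie-breaking rule) selects the same edge, and therefore the same important guard, in both cases. Applying \cref{obs:pwvisibility} then yields the claimed equivalence of visibility for every $v_b$ with $b<i-1$.

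I do not anticipate a substantial obstacle here: the statement is essentially a bookkeeping consequence of \cref{obs:pwvisibility} together with the definition of the guard set. The only point that requires care is making fully explicit \emph{why} $v_{i-1}\notin P^*_{v_i}$ forces $v_{i-1}$ to be incident to no edge of $E_i$, and thus to leave the relevant set of enclosing edges on each page unchanged when we pass from $v_i$ to $v_{i-1}$; once that is spelled out, the equivalence follows immediately on every page $p$.
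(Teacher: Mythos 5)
Your proof is correct, and it takes a slightly different route from the paper's. The paper argues directly on edges: it fixes a $v_b$ visible to one of $v_{i-1},v_i$ but not the other, deduces that some page-$p$ edge of $E_i$ must separate $v_{i-1}$ from $v_i$, and derives a contradiction from the fact that these two vertices are consecutive in $\prec$. You instead route the argument through \cref{obs:pwvisibility}: visibility from $v_a$ is determined by the $(\alpha,i,p)$-important guard, the candidate important edges for $v_i$ and $v_{i-1}$ can differ only in edges of $S_i$ with left endpoint exactly $v_{i-1}$, and the hypothesis $v_{i-1}\notin P^*_{v_i}$ rules those out since it forces $v_{i-1}$ to have no neighbor $v_q$ with $q\ge i$ and hence to be incident to no edge of $E_i$. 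Your version is arguably the more transparent of the two: it makes explicit exactly where the hypothesis $v_{i-1}\notin P^*_{v_i}$ enters (the paper's consecutiveness argument, read literally, handles an endpoint \emph{strictly} between $v_{i-1}$ and $v_i$ but leaves the boundary case of an edge anchored at $v_{i-1}$ itself to the reader), and it dovetails with the important-guard bookkeeping that the algorithm actually maintains. The paper's proof is shorter and avoids invoking \cref{obs:pwvisibility}. Both are sound.
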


\begin{proof}
By definition $v_{i-1}$ and $v_i$ are consecutive in $\prec$. 
Let $v_b$ (for $b<i-1$) be a vertex that is $\alpha$-visible to $v_{i-1}$ on page~$p$. If $v_b$ is not $\alpha$-visible to $v_i$ on $p$, then there must be a vertex $w$ between $v_{i-1}$ and $v_i$ that is 
incident to an edge in $E_i$ separating $v_{i-1}$ and $v_i$ on page $p$. But this contradicts that $v_{i-1}$ and $v_i$ are consecutive in $\prec$. The other direction follows by the same argument.
\qed\end{proof}

There is, however, a caveat: \cref{obs:vis-update} does not (and in fact cannot) allow us to compute the new visibility vector if $v_{i-1}\in P^*_{i}$. To circumvent this issue, our algorithm will not only store the visibility vector $U_i(v_i,\alpha)$ but also the visibility vectors for each guard of $v_i$. We now prove that we can compute the visibility vector for any vertex from the visibility vectors of the guards---this is important when updating our records, since we will need to obtain the visibility records for new guards that are introduced at some step of~the~algorithm. 

\begin{lemma}
\label{lem:pwvisibilitynew}
Let $v_a \prec v_i$, $\alpha$ be a valid partial page assignment of $E_i$, $p \in [k]$ be a page,
and assume $v_a\notin P^*_i$. 
Let $v_b\in P^*_i\cup \{v_i\}$ be such that $b>a$ and $|b-a|$ is minimized, i.e., $v_b$ is the first guard to the right of $v_a$. Then $U_i(v_a,\alpha)=U_i(v_b,\alpha)$.
\end{lemma}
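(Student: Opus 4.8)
The plan is to prove the equality of the two visibility vectors component by component: fixing an arbitrary page $p \in [k]$, I would show that the $(\alpha,i,p)$-important guard of $v_a$ coincides with that of $v_b$, and since $p$ is arbitrary this yields $U_i(v_a,\alpha)=U_i(v_b,\alpha)$.

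First I would record a structural fact about $S_i$: every edge $v_cv_d\in S_i$ has its right endpoint strictly to the right of $v_i$, i.e.\ $d>i$. Since $a<i$ and also $b\leq i$ (because $v_b\in P^*_i\cup\{v_i\}$, and any guard of $v_i$ lies to the left of $v_i$), this gives $d>a$ and $d>b$ automatically. Hence any edge of $S_i$ whose left endpoint lies to the left of $v_a$ (resp.\ $v_b$) actually encloses $v_a$ (resp.\ $v_b$), so that conditions (1)--(2) in the definition of the important edge are captured exactly by the candidate sets $C_a=\{v_cv_d\in S_i : \alpha(v_cv_d)=p,\ c<a\}$ and $C_b=\{v_cv_d\in S_i : \alpha(v_cv_d)=p,\ c<b\}$ for $v_a$ and $v_b$ respectively.

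The heart of the argument is to show $C_a=C_b$. As $a<b$ we trivially have $C_a\subseteq C_b$, so it remains to rule out edges in $C_b\setminus C_a$, which would have left endpoint $v_c$ with $a\leq c<b$. The left endpoint of any edge of $S_i$ is a guard, i.e.\ $v_c\in P^*_i$. The case $c=a$ is impossible because $v_a\notin P^*_i$ by hypothesis, and the case $a<c<b$ is impossible because $v_b$ is the first element of $P^*_i\cup\{v_i\}$ to the right of $v_a$, so no guard lies strictly between $v_a$ and $v_b$. Therefore $C_b\setminus C_a=\emptyset$ and $C_a=C_b$.

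Finally I would reconcile the two selection rules. The important edge of $v_a$ minimizes $|a-c|=a-c$ over $C_a$ (breaking ties by minimum $|d-c|$), while that of $v_b$ minimizes $|b-c|=b-c$ over $C_b$ with the same tie-break. Since every candidate edge satisfies $c<a<b$, both distances are minimized precisely by maximizing $c$, i.e.\ by selecting the guard nearest to the left, and the secondary tie-break on $|d-c|$ is then identical; consequently the selected edge and its left endpoint agree for $v_a$ and $v_b$, while if $C_a=C_b=\emptyset$ both components equal $\diamond$. The main obstacle is the third step above, namely verifying that no guard lies between $v_a$ and $v_b$ so that the candidate edge sets coincide; once this is established, the reconciliation of the two distance-minimizations is immediate because all candidate left endpoints lie to the left of both $v_a$ and $v_b$.
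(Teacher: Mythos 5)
Your proof is correct. It rests on the same key fact as the paper's proof---that no element of $P^*_i$ lies in the half-open interval between $v_a$ (inclusive) and $v_b$ (exclusive), by the hypothesis $v_a\notin P^*_i$ and the minimality of $|b-a|$---but it deploys that fact differently. The paper argues about \emph{visibility}: it shows that a vertex $v_x$ with $x<a$ is $\alpha$-visible to $v_a$ on page $p$ if and only if it is $\alpha$-visible to $v_b$, using the absence of guards between $v_a$ and $v_b$ to rule out a separating edge, and then concludes that the visibility vectors coincide (implicitly passing back through \cref{obs:pwvisibility}, since the vector actually records important guards rather than visible sets). You instead work directly with the definition of the $(\alpha,i,p)$-important edge: you show the candidate edge sets for $v_a$ and $v_b$ are identical (any edge in the difference would need a guard as left endpoint in the forbidden interval), and that the selection rule---minimizing $a-c$, respectively $b-c$, with the same tie-break on $|d-c|$---picks the same edge from that common set because both amount to maximizing $c$. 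Your route is slightly more direct with respect to what the lemma actually asserts (equality of the recorded guards, including the $\diamond$ case), and it avoids the small implicit conversion step in the paper; the paper's route is shorter and reads more geometrically. Both are sound.
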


\begin{proof}
Let $v_x$ for $x<a$ be any vertex that is $\alpha$-visible to $v_a$ in page $p$ and assume $v_x$ is not $\alpha$-visible to $v_b$. Then there must be an edge $wz \in E_i$ separating $v_a$ from $v_b$ in page $p$, i.e., $v_a \prec w \prec v_b$.  But in that case $w$ is a guard in $P^*_i$ closer to $v_a$ contradicting the choice of $v_b$. 
Conversely, let $v_x$ for $x<a$ be a vertex that is not $\alpha$-visible to $v_a$ in page $p$. Then there must be an edge $wz \in E_i$ separating $v_x$ from $v_a$ on page $p$. Then edge $wz$ also separates $v_x$ from $v_b$ and $v_x$ is not $\alpha$-visible to $v_b$.
Therefore, the visibility vectors $U_i(v_a,\alpha)$ and $U_i(v_b,\alpha)$ corresponding to the vertices $v_a$ and $v_b$, respectively, 
are equal.
\qed
\end{proof}

We can now formally define our record set as $Q_{i}=\{(U_i(v_i,\alpha), U_i(g_1^i,\alpha), \ldots,\\  U_i(g_{m-1}^i,\alpha)) \mid \exists \text{ valid partial page assignment } \alpha \colon E_i \rightarrow [k]\}$, where each individual element (record) in $Q_{i}$ can be seen as a queue starting with the visibility vector for $v_i$ and then storing the visibility vectors for individual guards (note that there is no reason to store an ``empty'' visibility vector for $g_m^i$). To facilitate the construction of a solution, we will also store a function $\Lambda_i$ from $Q_i$ to valid partial page assignments of $E_i$ which maps each tuple $\omega\in Q_i$ to some $\alpha$ such that $\omega=(U_i(v_i,\alpha), U_i(g_1^i,\alpha), \ldots,  U_i(g_{m-1}^i,\alpha))$.

Let us make some observations about our records $Q_{i}$. First of all, since there are at most $(\kappa+2)^{\kappa}$ many visibility vectors, $|Q_i|\leq (\kappa+2)^{\kappa^2}$. Second, if $|Q_0|>0$ then, since $E_0=E$, the mapping $\Lambda_0(\omega)$ will produce a valid page assignment of $E$ for any $\omega\in Q_0$. On the other hand, if $G$ admits a $k$-page book embedding $\alpha$ with order $\prec$, then $\alpha$ witnesses the fact that $Q_0$ cannot be empty. Hence, the algorithm can return one, once it correctly computes $Q_0$ and $\Lambda_0$.

The computation is carried out dynamically and starts by setting $Q_n=\{\omega\}$, where $\omega=(\diamond)$, and $\Lambda_n(\omega)=\emptyset$. 
For the inductive step, assume that we have correctly computed $Q_i$ and $\Lambda_i$, and the aim is to compute $Q_{i-1}$ and $\Lambda_{i-1}$. For each $\omega=(\omega_1,\dots,\omega_m)\in Q_i$, we compute an intermediate record $\omega'$ which represents the visibility vector of $v_{i-1}$ w.r.t.\ $\alpha=\Lambda_i(\omega)$ as follows:

\begin{itemize}
\item if $v_{i-1}\in P^*_{i}$, then $\omega'=(\omega_2,\dots,\omega_m)$, and
\item if $v_{i-1}\not \in P^*_{i}$, then $\omega'=(\omega_1,\dots,\omega_m)$ (Recall \cref{obs:vis-update}).
\end{itemize}

We now need to update our intermediate record $\omega'$ to take into account the new guards. In particular, we expand $\omega'$ by adding, for each new guard $g^{i-1}_j\in P^*_{i-1}\setminus P^*_{i}$, an intermediate visibility vector $U_{i-1}(g^{i-1}_j,\alpha)$ at the appropriate position in $\omega'$ (i.e., mirroring the ordering of guards in $P^*_{i-1}$). Recalling \cref{lem:pwvisibilitynew}, we compute this new intermediate visibility vector $U_{i-1}(g^{i-1}_j,\alpha)$ by copying the visibility vector that immediately succeeds it in $\omega'$. 

Next, let $F_{i-1}=E_{i-1}\setminus E_i$ be the at most $\kappa$ new edges that we need to account for, and let us branch over all assignments $\beta \colon F_{i-1}\rightarrow [k]$. For each such $\beta$, we check whether $\alpha\cup\beta$ is a valid partial page assignment of $E_{i-1}$, i.e., whether the new edges in $F_{i-1}$ do not cross with each other or other edges in $E_i$ when following the chosen assignment $\beta$ and the assignment $\alpha$ obtained from $\Lambda_i$. As expected, we discard any $\beta$ such that $\alpha\cup\beta$ is not valid.

Our final task is now to update the intermediate visibility vectors $U_{i-1}(*,\alpha)$ (with $*$ being a placeholder) to $U_{i-1}(*,\alpha\cup \beta)$. This can be done in a straightforward way by, e.g., looping over each edge $e\in F_{i-1}$, obtaining the page $p=\beta(e)$ that $e$ is mapped to, reading $U_{i-1}(*,\alpha)[p]$ and replacing that value by the guard $g$ incident to $e$ if $g$ occurs to the right of $U_{i-1}(*,\alpha)[p]$ and to the left of $*$. Finally, we enter the resulting record $\omega'$ into $Q_{i-1}$.

\begin{lemma}
\label{lem:vcordercorrect-2}
The above procedure correctly computes $Q_{i-1}$ from $Q_i$.
\end{lemma}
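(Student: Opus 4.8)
The plan is to prove the two inclusions, namely that every record output by the procedure genuinely lies in $Q_{i-1}$ and that every record of $Q_{i-1}$ is produced, in the same spirit as \cref{lem:vcordercorrect}. The principle I would make precise throughout is that every step of the update---the reinterpretation of $\omega$ into $\omega'$, the expansion for new guards, the validity test for $\beta$, and the edge-by-edge update of step~5---reads only the stored visibility vectors and not the underlying representative $\alpha=\Lambda_i(\omega)$ itself. This ``interchangeability'' is exactly what licenses storing a single representative per record, and it is the heart of the argument.

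\emph{Soundness.} For a record $\omega''$ output from a pair $(\omega,\beta)$, I would check that $\alpha\cup\beta$ is a valid partial page assignment of $E_{i-1}$ and that $\omega''=(U_{i-1}(v_{i-1},\alpha\cup\beta),U_{i-1}(g^{i-1}_1,\alpha\cup\beta),\dots)$. Validity is precisely the test performed before $\omega''$ is entered, so it remains to verify that each transformation computes the claimed vector. The reinterpretation step is justified by \cref{obs:vis-update}: when $v_{i-1}\notin P^*_i$ the visibilities of $v_{i-1}$ and $v_i$ coincide, so $\omega_1$ is the correct entry, and when $v_{i-1}\in P^*_i$ it equals $g^i_1$ whose vector is stored as $\omega_2$; the retained guard vectors are unaffected because the only edges distinguishing $S_i$ from $S_{i-1}\cap E_i$ have left endpoint $v_{i-1}$ and hence enclose no guard lying left of $v_{i-1}$. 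The expansion step is exactly \cref{lem:pwvisibilitynew}, which gives the vector of a new guard as a copy of the next stored vertex to its right. Finally, step~5 turns $U_{i-1}(*,\alpha)$ into $U_{i-1}(*,\alpha\cup\beta)$ because $F_{i-1}$ consists precisely of the edges of $S_{i-1}$ incident to $v_i$, and by \cref{obs:pwvisibility} an important guard can only move nearer (to the right) when edges are added; retaining, per page, the rightmost incident guard that lies to the left of $*$ is order-independent and yields the true nearest enclosing edge.

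\emph{Completeness.} Given an arbitrary $\omega''\in Q_{i-1}$ with witness $\alpha^{\star}\colon E_{i-1}\to[k]$, I would split $\alpha^{\star}=\alpha\cup\beta$ with $\alpha=\alpha^{\star}|_{E_i}$ and $\beta=\alpha^{\star}|_{F_{i-1}}$. Then $\alpha$ is a valid partial page assignment of $E_i$, so $Q_i$ contains a record $\omega$ whose visibility vectors match those of $\alpha$; put $\alpha'=\Lambda_i(\omega)$. Running the procedure on $(\omega,\beta)$ uses $\alpha'$ instead of $\alpha$, so I must argue the output is still $\omega''$. Here the interchangeability principle is invoked twice: (i) since $\alpha$ and $\alpha'$ induce the same stored visibility vectors, \cref{obs:pwvisibility} implies a new edge $v_av_i\in F_{i-1}$ can be placed on page $p$ without crossing $\alpha'$ exactly when it can without crossing $\alpha$---and the edges of $F_{i-1}$ pairwise share $v_i$, so they never cross each other---whence $\alpha'\cup\beta$ passes the validity test iff $\alpha\cup\beta$ does; and (ii) because the reinterpret/expand/update steps depend only on those vectors, running them on $\alpha'$ yields the same result as on $\alpha$, which by the soundness computation equals $U_{i-1}(\cdot,\alpha\cup\beta)=U_{i-1}(\cdot,\alpha^{\star})$. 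Thus the procedure outputs $\omega''$.

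\emph{Main obstacle.} The routine part is the bookkeeping of which component of $\omega$ corresponds to which vertex after the shift, together with the re-indexing of guards. The delicate point I would write out most carefully is the interchangeability claim itself: that both the validity of the extension and the resulting visibility vectors are functions of the stored vectors alone. Establishing this cleanly requires combining all three of \cref{obs:pwvisibility,obs:vis-update,lem:pwvisibilitynew} with the structural fact that $F_{i-1}$ is exactly the set of edges of $S_{i-1}$ incident to $v_i$, so that their effect on visibility is fully captured by the page-wise nearest-enclosing-guard data the record already carries.
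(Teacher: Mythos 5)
Your proposal is correct and follows essentially the same route as the paper's proof: both directions are argued via the same decomposition $\gamma=\alpha\cup\beta$, the representative swap $\alpha'=\Lambda_i(\omega)$, and the same three supporting results (\cref{obs:pwvisibility,obs:vis-update,lem:pwvisibilitynew}). You merely spell out in more detail the interchangeability argument that the paper compresses into ``it can be straightforwardly verified'' and ``it suffices to note,'' which is a faithful (and welcome) elaboration rather than a different approach.
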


\begin{proof}
Consider an entry $\omega'$ computed by the above procedure from some $\alpha\cup \beta$ and $\omega$. Since we explicitly checked that $\alpha\cup \beta$ is a valid partial page assignment for $E_{i-1}$, there must exist a record $(U_{i-1}(v_{i-1},\alpha\cup \beta), U_{i-1}(g_1^{i-1},\alpha\cup\beta), \dots, U_{i-1}(g_{m-1}^{i-1}))\in Q_{i-1}$, and by recalling \cref{obs:pwvisibility}, \cref{lem:pwvisibilitynew} and \cref{obs:vis-update} it can be straightforwardly verified that this record is equal to~$\omega'$.

For the opposite direction, consider a tuple $\omega_0\in Q_{i-1}$ that arises from the valid partial page assignment $\gamma$ of $E_{i-1}$, and let $\beta$, $\alpha$ be the restrictions of $\gamma$ to $F_{i-1}$ and $E_i$, respectively. Since $\alpha$ is a valid partial page assignment of $E_i$, there must exist a tuple $\omega\in Q_i$ that arises from $\alpha$. Let $\alpha'=\Lambda_i(\omega)$. To conclude the proof, it suffices to note that during the branching stage the algorithm will compute a record from a combination of $\alpha'$ (due to $\omega$ being in $Q_i$) and $\beta$, and the record computed in this way will be precisely $\omega_0$.
\qed\end{proof}

This proves the correctness of the algorithm. 
The runtime is upper bounded by $\bigoh(n\cdot (\kappa+2)^{\kappa^2}\cdot \kappa^\kappa)$ (the product of the number of times we compute a new record, the number of records and the branching factor for $\beta$). 
A minimum-page book embedding can be obtained by trying all possible choices for $k\in [\kappa]$).
We summarize Result 2 below.

\begin{theorem}
There is an algorithm which takes as input an $n$-vertex graph $G=(V,E)$ with a vertex ordering $\prec$ and computes a page assignment $\sigma$ of $E$ such that $(\prec,\sigma)$ is a $(\fobt(G,\prec))$-page book embedding of $G$. The algorithm runs in $n\cdot \kappa^{\bigoh(\kappa^2)}$ time where $\kappa$ is the pathwidth of $(G,\prec)$.
\end{theorem}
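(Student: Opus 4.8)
The plan is to treat this theorem as the assembly of the machinery developed above: all the conceptual work already resides in the visibility abstraction (\cref{obs:pwvisibility,obs:vis-update,lem:pwvisibilitynew}) and in the single-step correctness guarantee of \cref{lem:vcordercorrect-2}, so the proof only has to (i) promote that single step to a global correctness statement by induction, (ii) show how a witness (and its page-minimality) is read off from the terminal record set, and (iii) tally the runtime. I would state the theorem's proof in exactly these three movements.

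For correctness, I would argue by downward induction on $i$ that the algorithm computes $Q_i$ and $\Lambda_i$ correctly for every $i$ from $n$ down to $0$. The base case is the initialization $Q_n=\{(\diamond)\}$ with $\Lambda_n((\diamond))=\emptyset$, which is correct because $E_n=\emptyset$ (there are no edges strictly to the right of $v_n$), so the only valid partial page assignment is the empty one and its visibility vector is trivially $(\diamond)$. The inductive step is then handed to \cref{lem:vcordercorrect-2} verbatim: assuming $Q_i,\Lambda_i$ are correct, the described branching over intermediate records $\omega'$ and assignments $\beta\colon F_{i-1}\to[k]$ produces exactly $Q_{i-1}$ and a valid $\Lambda_{i-1}$. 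Once $Q_0$ is reached, I would invoke the observation that $E_0=E$: if $Q_0\neq\emptyset$ then for any $\omega\in Q_0$ the assignment $\Lambda_0(\omega)$ is a valid (non-crossing) page assignment of all of $E$, i.e.\ a $k$-page book embedding with order $\prec$; conversely any $k$-page embedding $\alpha$ of $G$ induces a tuple witnessing $Q_0\neq\emptyset$. Hence $Q_0\neq\emptyset$ is equivalent to $\fobt(G,\prec)\le k$, and $\Lambda_0$ supplies the witness.

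For page-minimality and runtime, I would first appeal to \cref{lem-path-k} to bound $\fobt(G,\prec)\le\kappa$, so it suffices to run the procedure for each $k=1,2,\dots,\kappa$ and return the embedding produced by $\Lambda_0$ for the smallest $k$ with $Q_0\neq\emptyset$; that smallest $k$ is precisely $\fobt(G,\prec)$. For the running time at a fixed $k$, I would observe that there are $n$ inductive steps, that each $|Q_i|\le(\kappa+2)^{\kappa^2}$ by the bound on the number of distinct visibility vectors, and that for each record we branch over at most $k^{|F_{i-1}|}\le\kappa^\kappa$ assignments $\beta$, with each branch handled in time polynomial in $\kappa$ (validity checking and updating the intermediate visibility vectors). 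This gives $\bigoh\bigl(n\cdot(\kappa+2)^{\kappa^2}\cdot\kappa^\kappa\bigr)$ per value of $k$; multiplying by the $\kappa$ choices of $k$ absorbs cleanly into $n\cdot\kappa^{\bigoh(\kappa^2)}$.

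I expect the only genuinely delicate point — and the one already discharged before the theorem — to be the \emph{interchangeability} underlying the bounded record set: namely that two valid partial page assignments of $E_i$ inducing the same tuple of visibility vectors (for $v_i$ and all guards in $P^*_{v_i}$) are indistinguishable with respect to every subsequent leftward decision, so that storing a single representative per tuple in $\Lambda_i$ loses no solutions. This is exactly what \cref{lem:vcordercorrect-2} encodes, resting in turn on the structural regularity of visibilities in \cref{obs:pwvisibility}, the stability of visibilities across consecutive non-guard vertices in \cref{obs:vis-update}, and the guard-copying rule justified by \cref{lem:pwvisibilitynew}. Given those, the remaining work in the theorem's proof is the routine induction and the elementary counting above, so I would keep the write-up short and lean entirely on the cited lemmas.
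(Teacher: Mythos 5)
Your proposal matches the paper's own argument essentially verbatim: the paper proves this theorem by the same downward induction (base case $Q_n=\{(\diamond)\}$, inductive step delegated to \cref{lem:vcordercorrect-2}), reads off the witness from $Q_0$ via $\Lambda_0$ using $E_0=E$, bounds the page count by $\kappa$ via \cref{lem-path-k}, and tallies the runtime as $\bigoh(n\cdot(\kappa+2)^{\kappa^2}\cdot\kappa^\kappa)$ with an outer loop over $k\in[\kappa]$ for minimality. No gaps; the write-up is correct and takes the same route.
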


\section{Algorithms for \textsc{Book Thickness}}\label{without-ordering}

We now turn our attention to the general definition of book thickness (without a fixed vertex order). 
We show that, given a graph $G$, in polynomial time we can construct an equivalent instance $G^*$ whose size is upper-bounded by a function of $\tau(G)$. Such an algorithm is called a \emph{kernelization} and directly implies the fixed-parameter tractability of the problem with this parameterization~\cite{DBLP:books/sp/CyganFKLMPPS15,DBLP:series/txcs/DowneyF13}.

\begin{theorem}\label{th:vc-main}
There is an algorithm which takes as input an $n$-vertex graph $G=(V,E)$ and a positive integer $k$, 
runs in time $\bigoh(\tau^{{\tau^{\bigoh(\tau)}}} + 2^\tau \cdot n)$ where $\tau=\tau(G)$ is the vertex cover number of $G$, and
decides whether $\bt(G) \le k$. If the answer is positive, it can also output a $k$-page book embedding of $G$.
\end{theorem}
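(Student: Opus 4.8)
The plan is to give a kernelization: from a minimum vertex cover $C$ of size $\tau$ (computable in the stated time by the cited result) and the independent set $U=V\setminus C$, I would delete redundant vertices of $U$ until the whole instance has size bounded by a function of $\tau$, and then solve the reduced instance by brute force. First I dispose of the easy regime: if $\tau\le k$, then the vertex cover $C$ has size $\tau\le k$, so \cref{obs-vcorder} (which is order-agnostic) yields a $\tau$-page, hence $k$-page, book embedding and the answer is YES. Thus I may assume $k<\tau$ throughout, so that $k^{|N|}\le \tau^\tau$ for every $N\subseteq C$. The reduction rule is then the natural twin rule: partition $U$ into at most $2^\tau$ classes $U_N$ by neighborhood $N=N(u)\subseteq C$ (vertices with equal neighborhood are false twins, since $U$ is independent), fix a threshold $t=t(\tau)=\tau^{\bigoh(\tau)}$, and for every class with $|U_N|>t$ delete arbitrary vertices until exactly $t$ remain. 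The resulting graph $G^*$ has $|C|+\sum_N \min(|U_N|,t)\le \tau+2^\tau\cdot t=\tau^{\bigoh(\tau)}$ vertices and is produced in time $\bigoh(2^\tau\cdot n)$ (vertex-cover computation plus bucketing the vertices of $U$ by their $\tau$-bit neighborhood signatures).

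It remains to prove $\bt(G^*)\le k \Leftrightarrow \bt(G)\le k$. The forward implication is immediate: $G^*$ is a subgraph of $G$ and book thickness is monotone under taking subgraphs. The reverse implication is the heart of the argument, and I would base it on a structural fact about twins sharing a region of the spine. Fix a $k$-page embedding of $G^*$, and for a class $U_N$ call the regions between consecutive vertices of $C$ the \emph{gaps}. The claim I would establish is: two vertices of $U_N$ placed in the same gap with the same page assignment $\phi\colon N\to[k]$ are mutually crossing-free \emph{if and only if} $\phi$ is injective (at most one neighbor per page). In that page-injective case, edges to a common neighbor merely run in parallel and share an endpoint, so arbitrarily many such twins can be stacked consecutively in the gap with assignment $\phi$ without creating any crossing, either among themselves or with the rest of the drawing (a block inserted immediately next to a witness vertex $u'$ has nothing foreign between it and $u'$, and on each page $u'$ has at most one incident edge, so each new edge crosses exactly what the corresponding edge at $u'$ crosses, namely nothing). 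I would prove the characterization by a direct chord-interleaving analysis, separating neighbors lying to the left and to the right of the gap.

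Re-insertion then proceeds as follows. Each class $U_N$ that underwent deletion still contains $t$ survivors, distributed over the $\tau+1$ gaps and the at most $k^{|N|}\le \tau^\tau$ possible assignments. Choosing $t$ so that $t/((\tau+1)\tau^\tau)\ge 2$, a pigeonhole argument forces some gap to contain two survivors of $U_N$ sharing a single assignment $\phi$; by the structural fact this $\phi$ must be page-injective. I then insert all deleted vertices of class $N$ as a consecutive block immediately next to one of these survivors, assigning each of them $\phi$. The structural fact guarantees the block introduces no crossing, and since insertion only refines the spine order, it disturbs no existing crossing-free pair. Performing this for every class yields a $k$-page book embedding of $G$, so the instances are equivalent.

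The main obstacle is precisely this reverse direction, and within it the crossing characterization: the seemingly obvious ``duplicate a twin adjacently and copy its pages'' does \emph{not} work, because it creates a crossing as soon as some page carries two neighbors on the same side of the gap, or one neighbor on each side. It is the restriction to page-injective configurations, together with the guarantee (via the threshold and pigeonhole) that such a configuration is always realized among the survivors, that makes safe insertion possible. Everything else is routine: once $G^*$ is equivalent, I solve it by brute force, enumerating all $(|V(G^*)|)!$ spine orders and, for each, all $k^{|E(G^*)|}$ page assignments while testing crossing-freeness; with $|V(G^*)|=\tau^{\bigoh(\tau)}$ this runs in $\tau^{\tau^{\bigoh(\tau)}}$ time and returns an explicit embedding of $G^*$, which the constructive re-insertion extends to one of $G$. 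Together with the $\bigoh(2^\tau\cdot n)$ preprocessing, this gives the claimed bound $\bigoh(\tau^{\tau^{\bigoh(\tau)}}+2^\tau\cdot n)$.
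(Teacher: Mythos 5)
Your proposal is correct and follows essentially the same route as the paper: a kernelization that buckets $V\setminus C$ into neighborhood classes, uses a pigeonhole argument to find page-equivalent witnesses whose assignment must be page-injective (at most one neighbor per page), and re-inserts deleted twins next to such a witness with copied page labels. The only cosmetic difference is how page-injectivity is derived --- the paper finds three page-equivalent vertices anywhere and invokes the fact that $K_{2,3}$ has no $1$-page embedding, whereas you find two witnesses in a common gap and argue directly about interleaving chords; both yield the same $\tau^{\bigoh(\tau)}$-size kernel and the claimed running time.
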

\begin{proof}
If $k > \tau$, by \cref{obs-vcorder} we can immediately conclude that $G$ admits a $k$-page book embedding. Hence we shall assume that $k \le \tau$. We will also compute a vertex cover $C$ of size $\tau$ in time $\bigoh(2^\tau\cdot n)$ using well-known results~\cite{DBLP:journals/tcs/ChenKX10}.

For any subset $U \subseteq C$ we say that a vertex of  $V \setminus C$ is of \emph{type $U$} if its set of neighbors is equal to $U$. This defines an equivalence relation on $V \setminus C$ and partitions $V \setminus C$ into at most $\sum_{i=0}^{\tau} {\tau \choose{i}}=2^\tau$ distinct types. 
In what follows, we denote by $V_U$ the set of vertices of type $U$. 
We claim the following.

\begin{claim}\label{cl:kernel}
Let $v \in V_U$ such that $|V_U| \ge 2 \cdot k^\tau + 2$.
Then $G$ admits a $k$-page book embedding if and only if $G'=G \setminus \{v\}$ does. Moreover, a $k$-page book embedding of $G'$ can be extended to such an embedding for $G$ in linear time.
\end{claim}

\begin{proof}[of the Claim]
One direction is trivial, since removing a vertex from a book embedding preserves the property of being a book embedding of the resulting graph. 
So let $\langle \prec, \sigma \rangle$ be a $k$-page book embedding of $G'$. 
We prove that a $k$-page book embedding of $G$ can be easily constructed by inserting $v$ right next to a suitable vertex $u$ in $V_U$ and by assigning the edges of $v$ to the same pages as the corresponding edges of $u$. 
We say that two vertices $u_1, u_2 \in V_U$ are \emph{page equivalent}, if for each vertex $w \in U$, the edges $u_1w$ and $u_2w$ are both assigned to the same page according to $\sigma$. 
Each vertex in $V_U$ has degree exactly $|U|$, hence this relation partitions the vertices of $V_U$ into at most 
$k^{|U|} \le k^\tau$ sets. 
Since $|V_U| \setminus \{v\} \ge 2 \cdot k^\tau + 1$, at least three vertices of this set, which we denote by $u_1$, $u_2$, and $u_3$,  are page equivalent. 
Consider now the graph induced by the edges of these three vertices that are assigned to a particular page. 
By the above argument, such a graph is a $K_{h,3}$, for some $h>0$. However, since already $K_{2,3}$ does not admit a $1$-page book embedding, we have $h \le 1$, that is, each $u_i$ has at most one edge on each page. 
Then we can extend $\prec$ by introducing $v$ right next to $u_1$ and assign each edge $vw$ to the same page as $u_1w$. 
Since each such edge $vw$ runs arbitrarily close to the corresponding crossing-free edge $u_1 w$, this results in a $k$-page book embedding of $G$ and concludes the proof of the claim.~\qed
\end{proof}

\medskip 
We now construct a kernel $G^*$ from $G$ of size $\bigoh(k^{\tau})$ as follows. 
We first classify each vertex of $G$ based on its type. 
We then remove an arbitrary subset of vertices from each set $V_U$ with $|V_U| > 2 \cdot k^\tau + 1$ until $|V_U| = 2 \cdot k^\tau + 1$. 
Thus, constructing $G^*$ can be done in $\bigoh(2^\tau + \tau \cdot n)$ time, where $2^\tau$ is the number of types and $\tau \cdot n$ is the maximum number of edges of $G$.
From our claim above we can conclude that $G^*$ admits a $k$-page book embedding if and only if $G$ does. 
Determining the book thickness of $G^*$ can be done by guessing all possible linear orders and and page assignments in $O(k^\tau ! \cdot k^{k^\tau})=O(\tau^{\tau^{O(\tau)}})$ time. 
A $k$-page book embedding of $G^*$ (if any) can be extended to one of $G$ by iteratively applying the constructive procedure from the proof of the above claim, in $O(\tau \cdot n)$ time.~\qed
\end{proof}

The next corollary easily follows from \cref{th:vc-main}, by applying a binary search on the number of pages $k \le \tau$ and by observing that a vertex cover of minimum size $\tau$ can be computed in $2^{\bigoh(\tau)}+\tau\cdot n$ time~\cite{DBLP:journals/tcs/ChenKX10}.

\begin{corollary}\label{co:vc-min}
Let $G$ be a graph with $n$ vertices and vertex cover number $\tau$. A book embedding of $G$ with minimum number of pages can be computed in $\bigoh(\tau^{{\tau^{\bigoh(\tau)}}}+\tau \log \tau \cdot n)$ time.
\end{corollary}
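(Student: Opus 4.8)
The plan is to use the decision-and-witness procedure of \cref{th:vc-main} as a black box and wrap a binary search around the number of pages $k$. First I would observe that the predicate ``$\bt(G) \le k$'' is monotone in $k$: any $k$-page book embedding $\langle \prec, \sigma\rangle$ is trivially also a $(k+1)$-page embedding, since leaving one additional page empty never introduces a crossing. Hence there is a threshold $k^\star = \bt(G)$ such that the predicate is false for all $k < k^\star$ and true for all $k \ge k^\star$, and a binary search locates $k^\star$ using $\bigoh(\log \tau)$ evaluations of the predicate. The relevant search range is $\{1, \dots, \tau\}$: by \cref{obs-vcorder} the graph always admits a $\tau$-page embedding, so $k^\star \le \tau$.

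Next I would separate the one-time preprocessing from the per-iteration work so that the vertex-cover computation is not needlessly repeated. Concretely, I would compute a minimum vertex cover $C$ once, in $2^{\bigoh(\tau)} + \tau \cdot n$ time via the algorithm of Chen, Kanj and Xia, and reuse it throughout. Each of the $\bigoh(\log \tau)$ binary-search steps then runs the remainder of the \cref{th:vc-main} procedure for the current value of $k$: it classifies the vertices of $V \setminus C$ into their at most $2^\tau$ types and builds the kernel $G^*$ in $\bigoh(2^\tau + \tau \cdot n)$ time, and then decides book thickness of the kernel (and, at the accepting value, extracts a witness) in $\bigoh(\tau^{\tau^{\bigoh(\tau)}})$ time, with no further dependence on $n$.

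For the running time I would simply sum these contributions. The $n$-dependent work inside the loop is the $\bigoh(\tau \cdot n)$ term for type classification and kernel construction, incurred $\bigoh(\log \tau)$ times, which yields $\bigoh(\tau \log \tau \cdot n)$ and matches the stated $n$-term. The kernel-solving cost $\tau^{\tau^{\bigoh(\tau)}}$ is also incurred $\bigoh(\log \tau)$ times, but this factor is absorbed into the $\bigoh(\tau)$ appearing in the exponent, leaving $\tau^{\tau^{\bigoh(\tau)}}$ unchanged; the one-time $2^{\bigoh(\tau)}$ preprocessing and the per-iteration $2^\tau$ terms are likewise dominated by $\tau^{\tau^{\bigoh(\tau)}}$. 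Combining gives the claimed bound $\bigoh(\tau^{\tau^{\bigoh(\tau)}} + \tau \log \tau \cdot n)$. Finally, the witness returned by \cref{th:vc-main} at the accepting endpoint $k = k^\star$ of the search is, by construction, a book embedding of $G$ with the minimum number of pages.

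I do not expect a genuine obstacle here, since this is a routine corollary of \cref{th:vc-main}. The only point that requires care is the runtime bookkeeping: one must compute the vertex cover a single time and reuse it across the $\bigoh(\log \tau)$ decision calls, and then verify that the binary-search overhead is harmless --- the $\log \tau$ factor is absorbed into the tower exponent for the term that is constant in $n$, and contributes only the extra $\log \tau$ factor to the term that is linear in $n$.
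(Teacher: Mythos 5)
Your proposal is correct and follows exactly the paper's own argument: a binary search over $k \in [\tau]$ using the monotonicity of ``$\bt(G)\le k$'', with the minimum vertex cover computed once up front and the runtime bookkeeping carried out as you describe. Nothing further is needed.
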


\section{Conclusions and Open Problems}\label{conclusions}
We investigated the parameterized complexity of \textsc{Book Thickness} and \textsc{Fixed-Order Book Thickness}. We proved that both problems can be parameterized by the vertex cover number of the graph, and that the second problem can be parameterized by the pathwidth of the fixed linear order. The algorithm for \textsc{Book Thickness} is the first fixed-parameter algorithm that works for general values of $k$, while, to the best of our knowledge, no such algorithms were known for \textsc{Fixed-Order Book Thickness}.  

We believe that our techniques can be extended to the setting in which we allow edges on the same page to cross, with a given budget of at most $c$ crossings over all pages. This problem has been studied by Bannister and Eppstein~\cite{DBLP:journals/jgaa/BannisterE18} with the number of pages $k$ restricted to be either $1$ or $2$. It would also be interesting to investigate the setting where an upper bound on the maximum number of crossings \emph{per edge} is given as part of the input, which is studied in~\cite{DBLP:journals/ejc/BinucciGHL18}.

The main question that remains open is whether \textsc{Book Thickness} (and \textsc{Fixed-Order Book Thickness}) can be solved in polynomial time (and even fixed-parameter time) for graphs of bounded treewidth, which was asked by Dujmovi{\'c} and Wood~\cite{DBLP:journals/dmtcs/WoodD11}. As an intermediate step towards solving this problem, we ask whether the two problems can be solved efficiently when parameterized by the treedepth~\cite{DBLP:books/daglib/0030491} of the graph. Treedepth restricts the graph structure in a stronger way than treewidth, and has been used to obtain algorithms for several problems which have proven resistant to parameterization by treewidth~\cite{DBLP:journals/ai/GanianO18,DBLP:journals/siamdm/GutinJW16}.

\bibliographystyle{splncs04}
\bibliography{GD-ref}

\clearpage

\appendix

\section{Missing Proofs of \cref{fixed-ordering}}

\setcounter{observation}{1}
\begin{observation}
\obstwo
\end{observation}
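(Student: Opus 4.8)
The plan is to establish the contrapositive: assuming $(G,\prec,k)$ is a YES-instance, I will produce a specific $s\in S$ witnessing $\RRR_{n-\tau}(s)\neq\emptyset$. So fix a $k$-page book embedding $\langle\prec,\sigma\rangle$ of $G$, so that $\sigma\colon E\to[k]$ assigns the edges to pages without any two edges on the same page crossing.

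The key structural point is that, since $C$ is a vertex cover, every edge of $G$ has at least one endpoint in $C$; hence $E$ splits into the edges with both endpoints in $C$ and the edges with exactly one endpoint in $C$ (equivalently, one endpoint in $U$). Let $s$ be the restriction of $\sigma$ to the former set and let $\alpha$ be the restriction of $\sigma$ to the latter. Restricting a non-crossing assignment keeps it non-crossing, so $s$ is a non-crossing page assignment of the $C$-$C$ edges and therefore $s\in S$ by the very definition of $S$.

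Next I would verify that $\alpha$ is a valid partial page assignment of $E_{n-\tau}$. Recall $E_{n-\tau}=\{u_jc\mid j<n-\tau,\ c\in C\}$. Because $u_{n-\tau}$ is the dummy vertex of degree $0$ placed at the end of $\prec$, the restriction $j<n-\tau$ omits only a vertex with no incident edges, so $E_{n-\tau}$ is exactly the set of all $U$-$C$ edges, i.e., the domain of $\alpha$. Since $\alpha\cup s=\sigma$ is non-crossing, $\alpha$ meets the defining condition of a valid partial page assignment. It then follows immediately from the definition of $\RRR_{n-\tau}(s)$ that the tuple $\bigl(M_{n-\tau}(n-\tau,\alpha,s),M_{n-\tau}(x_1,\alpha,s),\dots,M_{n-\tau}(x_z,\alpha,s)\bigr)$ lies in $\RRR_{n-\tau}(s)$, so that record set is nonempty, contradicting the hypothesis.

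The argument is almost entirely bookkeeping, and I do not anticipate a genuine obstacle. The only two points needing care are confirming that the two restrictions of $\sigma$ together cover all of $E$ (which relies on the vertex-cover property ruling out $U$-$U$ edges) and that $E_{n-\tau}$ already accounts for every $U$-$C$ edge (which is exactly what the degree-$0$ dummy vertex at the end of $\prec$ guarantees).
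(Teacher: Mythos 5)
Your argument is exactly the paper's proof: decompose a hypothetical $k$-page book embedding $\sigma$ into its restriction $s$ to $C$--$C$ edges and its restriction $\alpha$ to the remaining ($U$--$C$) edges, observe that $s\in S$ and that $\alpha$ is a valid partial page assignment of $E_{n-\tau}$, and conclude that the corresponding tuple of visibility matrices witnesses $\RRR_{n-\tau}(s)\neq\emptyset$. The extra bookkeeping you supply (the vertex-cover property ruling out $U$--$U$ edges and the dummy vertex ensuring $E_{n-\tau}$ covers all $U$--$C$ edges) is correct and only makes explicit what the paper leaves implicit.
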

\begin{proof}
Assume for a contradiction that $G$ admits a $k$-page book embedding with order $\prec$. Let $s$ be the restriction of that book embedding to edges whose both endpoints lie in $C$, and let $\alpha$ be the restriction of that book embedding to all other edges. Then $\alpha$ is a valid page assignment, and hence by definition $(M_{n-\tau}(n-\tau,\alpha,s),M_{n-\tau}(x_1,\alpha,s),M_{n-\tau}(x_2,\alpha,s),\dots,M_{n-\tau}(x_z,\alpha,s) ) \in \RRR_{n-\tau}(s)$. In particular, $\RRR_{n-\tau}(s)\neq \emptyset$.
\qed
\end{proof}

\setcounter{lemma}{1}
\begin{lemma}
\lempathk
\end{lemma}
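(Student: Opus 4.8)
Looking at this lemma, I need to prove that if $(G,\prec)$ has pathwidth $k$, then $G$ admits a $k$-page book embedding computable in $\mathcal{O}(n+k\cdot n)$ time.

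Let me think about the structure here. Pathwidth $k$ means each guard set $P_i$ has size at most $k$. I need to assign edges to $k$ pages so no two same-page edges cross.

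The key insight: edges crossing a "gap" between consecutive vertices correspond to guards. Let me think about a greedy left-to-right assignment based on guards.

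\textbf{My proof plan:}

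The plan is to exhibit a page assignment $\sigma$ directly and argue it uses at most $k$ pages without same-page crossings. The key structural fact is that the pathwidth condition bounds, for each position in $\prec$, the number of edges that "span" that position. First I would formalize this: for each $i \in [n]$, consider the gap immediately to the left of $v_i$, and let $A_i = \{v_a v_b \in E \mid a < i \le b\}$ be the set of edges crossing this gap. I claim $|A_i| \le k$. Indeed, each edge $v_a v_b \in A_i$ has its left endpoint $v_a$ with $a < i$ and a right endpoint $v_b$ with $b \ge i$, so $v_a \in P_i$ (the guard set of $v_i$), and distinct edges in $A_i$ could share a left endpoint, so this bound needs care. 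The cleaner claim is that the \emph{interval graph} of the edges (where each edge $v_av_b$ is the open interval $(a,b)$ on the spine) has clique number at most $k$: any point strictly between two consecutive vertices is covered by at most $k$ intervals, precisely because the left endpoints of covering edges are guards.

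Next I would observe that two edges cross (in the book-embedding sense) if and only if their corresponding intervals properly overlap (interleave), which is exactly adjacency in the \emph{circle graph} / overlap graph, but the relevant coloring structure is that of the interval graph: I would color the edges so that any two \emph{overlapping} intervals get distinct colors. Since crossing requires overlapping, a proper coloring of the overlap relation suffices. The standard greedy approach sorts edges by left endpoint and assigns each the smallest available page not used by a currently-overlapping edge; the number of colors needed equals the maximum number of pairwise-overlapping intervals at any spine position, which is at most the clique number $k$ of the interval graph. Thus $k$ pages suffice.

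The main obstacle will be verifying the counting bound $|A_i| \le k$ precisely and connecting it to the guard-set definition, since the definition counts \emph{vertices} (guards) to the left of $v_i$ adjacent to something at position $\ge i$, not \emph{edges}. I would resolve this by processing the gaps left to right and arguing that at any gap the set of active (spanning) edges can be charged injectively to distinct guards: when I use the "first-fit / leftmost available page" rule processing edges in order of their right endpoint (or left endpoint), the number of pages in simultaneous use at any gap never exceeds the guard count, because each active edge at that gap is witnessed by a distinct guard vertex in the current guard set. Finally, I would note that this greedy coloring runs in $\mathcal{O}(n + k \cdot n)$ time: each of the at most $k\cdot n$ edges is processed once, maintaining the set of at most $k$ pages currently blocked, giving the claimed running time and completing the proof.
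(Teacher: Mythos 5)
There is a genuine gap at exactly the step you yourself flag as ``the main obstacle,'' and the resolution you sketch does not work. The counting claim you lean on --- that the number of edges spanning any gap of $\prec$ is at most $k$, equivalently that the interval graph of the edges has clique number at most $k$ --- is false. The pathwidth bound counts \emph{guards}, i.e.\ distinct left endpoints, not edges, and arbitrarily many spanning edges may share one guard. Concretely, a star $K_{1,n}$ with its center placed first has pathwidth $1$ under this order (every guard set is $\emptyset$ or $\{v_1\}$), yet all $n$ edges span the gap after $v_1$, so the interval intersection graph is a clique of size $n$. A first-fit coloring of the intersection relation, as you describe, would therefore use $n$ pages where $1$ suffices. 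Your proposed repair --- ``each active edge at that gap is witnessed by a distinct guard vertex'' --- is precisely the false statement: distinct active edges can be witnessed by the same guard. Even under the charitable reading where you only color the crossing (proper-overlap) relation, you give no argument that first-fit keeps all edges with a common left endpoint on a common page, which is what an injective charging of \emph{pages} to guards would require; and ``number of colors used by greedy equals the clique number'' is not available for overlap graphs (circle graphs are not perfect).

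The missing idea --- and the route the paper actually takes --- is to assign pages to \emph{guards} rather than to individual edges. Sweeping $\prec$ from right to left, one maintains an injective map $U_i \colon P_i \to [k]$ (possible since $|P_i| \le k$), extending it consistently as new guards appear, and every edge is placed on the page $U(p)$ of its left endpoint $p$. Two same-page edges then either share their left endpoint (hence cannot cross) or have left endpoints that are never simultaneously guards (hence their spans are disjoint and they cannot cross). This gives $k$ pages and the $\bigoh(n + k\cdot n)$ running time immediately. Your write-up never groups edges by left endpoint, and without that grouping the bound of $k$ pages cannot be recovered.
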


\begin{proof}
Let $\sigma$ be the page assignment to $[k]$ defined as follows.
We parse the vertices of $G$ following $\prec$ from right to left. 
Consider the rightmost vertex $v_n$ of $\prec$, and let $U_n$ be an arbitrary 
injective assignment from $P_n$ (the guard set of $v_n$) to $[k]$. 
For each edge $e$ incident to $v_n$ there exists 
some $p\in P_n$, and we assign $e$ to page $U_n(p)$. 
Observe that, this results in a non-crossing page assignment of the edges incident to $v_n$. 

Next, we proceed by induction. Assume, we have obtained a non-crossing page assignment for all edges incident to the last $i$ vertices, i.e., for all edges incident to $\{v_j | j\geq n-i\}$, and that furthermore we have a mapping $U_{n-i}$ which maps the guards $P_{n-i}$ for $v_{n-i}$ to $[k]$ and a non-crossing partial page assignment which maps all edges $pv_j$ where $p\in P_{n-i}$ and $j\geq n-i$ to $U_{n-i}(p)$. In particular, all edges with an endpoint to the left of $v_{n-i}$ end in the guards for $v_{n-i}$ and are assigned to distinct pages if and only if they are incident to distinct guards.

We extend this page assignment to all edges incident to the last $i+1$ vertices as follows. First, we extend $U_{n-i}$ to an arbitrary injective mapping $U_{n-i-1}$, which is always possible since the number of guards for $v_{i+1}$ is at most $k$. Second, we assign each left edge $e=v_{n-i-1}p$ of $v_{n-i-1}$ to $U_{n-i-1}(p)$.

To conclude the proof, observe that the page assignment obtained in this way is non-crossing. Indeed, the only edges added to the current page assignment are left edges of $v_{n-i-1}$, and each such edge $e=v_{n-i-1}p$ is assigned to the page $U_{n-i-1}(p)$---notably, they maintain the property of being assigned to distinct pages if and only if they are incident to distinct guards. Also, it can be computed in $\bigoh(n+k\cdot n)$ time. 
\qed\end{proof}

\end{document}